\documentclass[12pt]{article}

\usepackage[T1]{fontenc}
\usepackage[utf8]{inputenc}
\usepackage{lmodern}
\usepackage{microtype}

\usepackage{amsmath, amssymb, amsthm}
\usepackage{bm}
\usepackage{mathtools}

\usepackage{booktabs}
\usepackage{array}
\usepackage{tabularx}
\usepackage{multirow}
\usepackage{graphicx}
\graphicspath{{./}}
\usepackage{tikz}
\usetikzlibrary{arrows.meta, positioning}
\usepackage{float}
\usepackage{placeins}
\usepackage[font=small, labelfont=bf]{caption}

\usepackage{algorithm}
\usepackage{algpseudocode}

\usepackage{enumitem}
\usepackage{parskip}
\usepackage{natbib}
\usepackage[colorlinks=true, allcolors=black]{hyperref}

\newtheorem{theorem}{Theorem}
\newtheorem{proposition}[theorem]{Proposition}

\newtheorem{corollary}[theorem]{Corollary}

\newtheorem{assumption}{Assumption}
\newtheorem{remark}{Remark}

\newcommand{\cC}{\mathcal{C}}
\newcommand{\cF}{\mathcal{F}}
\newcommand{\cO}{\mathcal{O}}
\newcommand{\cA}{\mathcal{A}}
\newcommand{\cN}{\mathcal{N}}

\newcommand{\cG}{\mathcal{G}}

\newcommand{\bbE}{\mathbb{E}}
\newcommand{\bbP}{\mathbb{P}}
\DeclareMathOperator{\sign}{sign}
\DeclareMathOperator{\Var}{Var}

\DeclareMathOperator{\rank}{rank}

\newcommand{\Sthree}{\mathrm{S}^3}
\newcommand{\bSthree}{\mathbf{S}^{3}}

\newcommand{\anon}{1}

\def\spacingset#1{\renewcommand{\baselinestretch}%
{#1}\small\normalsize}

\begin{document}

\spacingset{1}

\if1\anon
{
  \title{\bf Sequential Supersaturated Screening Experiments}
  \author{%
    Songqiao Han$^{a,*}$, Kalliopi Mylona$^{a}$,
    Steven Gilmour$^{a}$, and Matteo Borrotti$^{b}$
    \\[6pt]
    \normalsize $^{a}$Department of Mathematics, King's College London
    \\
    \normalsize $^{b}$Department of Economics, Management and Statistics,
    University of Milano-Bicocca
    \\[4pt]
    \normalsize $^{*}$Corresponding author:
    \texttt{songqiao.han@kcl.ac.uk}}
  \date{}
  \maketitle
} \fi

\if0\anon
{
  \bigskip
  \bigskip
  \bigskip
  \begin{center}
    {\LARGE\bf Sequential Supersaturated Screening Experiments}
  \end{center}
  \medskip
} \fi

\bigskip
\begin{abstract}
Supersaturated screening experiments study many candidate factors with
few runs. The experimenter must first identify the active factors,
then optimize the response. One-shot regularized regression tends to
select many factors, and the second-order response surface model in the
selected factors is then large and needs many further runs to fit. We
propose \textbf{S}equential \textbf{S}upersaturated
\textbf{S}creening ($\bSthree$), a two-stage framework for supersaturated
designs. $\Sthree$ builds each screening round by coordinate exchange
under a new positive-cone design criterion that uses no
practitioner-chosen Welch calibration constant. It then removes
low-importance factors one round at a time, using a graduated quantile
rule. The criterion scores a design by how far its column correlations sit
above the Welch lower bound, and it adapts to the current number of runs
and candidate factors. We give a round-by-round
bound on the probability that a noise factor is ever fixed, under
conditions on noise survival and commitment at each round, together
with an explicit upper bound on the total number of Stage 1 runs. Across simulated screening problems and the Borehole benchmark,
$\Sthree$ achieves lower Type I error and higher $F_1$ scores than
one-shot cross-validated Lasso (LassoCV). It also improves
optimization quality in most settings and runs faster.
\end{abstract}

\noindent%
{\it Keywords:} Supersaturated Design, Sequential Screening, Graduated
Quantile Elimination, Two-Level Design, Response Surface Methodology
\vfill

\newpage
\spacingset{1.8}

\section{Introduction}
\label{sec:intro}

Experimentation is essential for understanding, improving, and optimizing industrial and scientific processes. The response of such a process can depend on a large number of input variables. When experimental runs are expensive, designs that require few runs are particularly valuable. 
A screening experiment varies a set of input variables, from here on the experimental factors, over a small number of runs to identify the active factors: those that make a non-negligible contribution to the response through a linear main effect, a quadratic main effect, or an interaction. With $p$ candidate factors and fewer runs than the $p + 1$ coefficients of a model with an intercept and linear main effects, the screening design is supersaturated, and the main effects cannot all be estimated from it. After screening, the experimenter still has to optimize the response, which requires a second-order response surface model in the factors the screen selects.

A natural strategy therefore runs in two stages. Stage 1 screens the $p$ candidates down to a shortlist, and Stage 2 fits and optimizes a second-order response surface model in the shortlisted factors. Both stages consume experimental runs, and the cost of Stage 2 is set by the screening output. The number of coefficients of a full second-order model grows quadratically with the number of selected factors, so every additional factor retained at Stage 1 increases the number of runs Stage 2 needs.

Existing sparse-regression screeners address a different first-stage problem. Given one fixed set of $n$ screening runs and $p$ candidate
factors, they choose a sparse model from that batch, usually for prediction or variable selection, rather than to keep the follow-up response surface model small. Cross-validated Lasso (LassoCV) \citep{tibshirani1996regression} is a standard one-shot baseline, but its penalty path is chosen by predictive cross-validation and takes no account of the second-order model that must be fitted next. In the two-level settings studied here, this default tuning selects comparatively many factors, so the second-order model in the selected factors is large and typically needs many further runs. The same issue arises for the ElasticNet \citep{zou2005regularization}, Adaptive Lasso \citep{zou2006adaptive}, SCAD \citep{fan2001variable}, and SIS \citep{fan2008sure}.

The design-of-experiments community has developed a growing set of tools for constructing and analyzing supersaturated designs (SSDs). 
Constrained $\Var(s_+)$ designs \citep{weese2021strategies} target column-pair correlations. Pareto-optimal SSD selection \citep{singh2023selection} and GDS-ARM aggregation over random interaction models \citep{singh2024factor} target one-shot selection or analysis criteria under their stated models. All of these methods operate in a one-shot paradigm: they analyze a single batch of data and do not adapt the design as evidence accumulates.

Sequential approaches to supersaturated screening exist, but to our knowledge none provides an end-to-end adaptive pipeline at the scale of this paper. The Stepwise Response Refinement Screener of \citet{phoa2014stepwise} iteratively refines the response after each identified factor on a fixed supersaturated design, which is a sequential analysis rather than a sequential experimental design. \citet{gutman2014augmenting} augment an existing SSD with a single follow-up set of runs chosen via Bayesian $D$-optimality, which is a two-stage rather than multistage procedure. Neither integrates a multi-round elimination rule with a supersaturated two-level design construction.

A separate line of work analyzes the geometric structure of the inner product matrix of an SSD. 
\citet[Theorem 2]{stallrich2025optimal} shows that, under their lasso sign-recovery probability criterion and when the active signs are known, a structure in which all column pairs share a positive correlation is ideal for two-level SSDs. Such exact symmetry need not be attainable for every $(n,p)$ design size. We therefore use it as a qualitative target and calibrate the column-pair inner products against the \citet{welch1974lower} lower bound, which gives the scale of unavoidable aliasing among $p > n$ columns. 

To close this gap we propose \textbf{S}equential
\textbf{S}upersaturated \textbf{S}creening ($\bSthree$), an adaptive method that screens and then optimizes. 
Instead of committing all runs at once, $\Sthree$ collects them in several small rounds and drops low-importance candidate factors round by round, stopping at a short list whose second-order response surface model is small enough to fit and optimize with a modest number of further runs. The experimenter does not divide the runs between the two stages in advance. Our main contributions are:
\begin{itemize}
  \item \textbf{An adaptive screening-to-optimization framework for
  supersaturated screening.}
  $\Sthree$ integrates multi-round elimination of candidate factors with
  coordinate-exchange design construction, complementing one-shot \citep{weese2021strategies,singh2023selection,singh2024factor}, fixed-design sequential-analysis
  \citep{phoa2014stepwise}, and single follow-up augmentation
  \citep{gutman2014augmenting} approaches.
  \item \textbf{The $\Var(s_+^{\mathrm{W}})$ design criterion}, with no practitioner-chosen Welch calibration constant: its weight is determined by the run and factor counts via the \citet{welch1974lower} lower
  bound. The criterion drives all column-pair correlations toward the shared positive sign structure of
  \citet[Theorem 2]{stallrich2025optimal}, at the magnitude scale supersaturation imposes (Section \ref{sec:welch}).
  \item \textbf{Theoretical support for the two-stage design.} An
  explicit upper bound on the total number of Stage 1 runs,
  $N_{\mathrm{S1}} \le p / q + O(\log p)$, and a round-by-round bound on the probability that an inactive factor is committed into Stage 2, with assumptions discussed in Section \ref{sec:theory} and round-level diagnostics in Appendix \ref{app:noise_diagnostics}.
  \item \textbf{Empirical comparison across simulation and a benchmark application.}
  Across simulated cases and the Borehole benchmark function \citep{harper1983sensitivity}, $\Sthree$ has lower Type I error and higher $F_1$ scores (defined in Appendix \ref{app:simulation_details}) than one-shot LassoCV. $\Sthree$ also improves optimization quality in most settings and runs faster (Sections \ref{sec:sim} and \ref{sec:borehole}).
\end{itemize}

\section{The \texorpdfstring{$\Sthree$}{S3} Framework}
\label{sec:method}

$\Sthree$ has two stages: Stage 1 screens the $p$ candidate factors down to a small active set, and Stage 2 fits and optimizes a second-order response surface over it. We first fix the model and notation, then present the design criterion used to construct each round of runs, and finally describe the two stages.

\subsection{Problem Setup and Notation}
\label{sec:feasibility}

Let $p$ denote the total number of candidate factors.
Each factor is coded linearly to $[-1,1]$, giving
$\mathbf{x}=(x_1,\ldots,x_p)^\top\in[-1,1]^p$. 
Stage 1 uses two-level designs on the current candidate factors.
These runs are constructed by coordinate exchange.
Stage 2 adds runs at levels $-1,0,+1$ to fit a second-order response surface, then optimizes it over $[-1,1]$ for each retained factor.
The response is modeled as a second-order response surface with measurement noise and a batch effect. A batch is a group of runs that shares a common nuisance shift, such as a day effect. The observation model is
\begin{equation}
  y_i = f(\mathbf{x}_i) + b_{d(i)} + \varepsilon_i,
  \qquad
  \varepsilon_i \overset{\text{i.i.d.}}{\sim} \mathcal{N}(0, \sigma^2),
  \label{eq:obs}
\end{equation}
where
\begin{equation}
  f(\mathbf{x})
  = \beta_0
  + \sum_{j \in \cA_{\mathrm{m}}} \beta_j x_j
  + \sum_{(j,\ell) \in \cA_{\mathrm{int}}} \beta_{j\ell}\, x_j x_\ell
  + \sum_{j \in \cA_{\mathrm{q}}} \beta_{jj}\, x_j^2,
  \label{eq:model}
\end{equation}
$d(i)$ is the batch index for observation $i$, and $b_{d(i)} \overset{\text{i.i.d.}}{\sim} \mathcal{N}(0, \sigma_b^2)$ is a
batch-level mean shift independent of $\mathbf{x}$. 
The sets $\cA_{\mathrm{m}}$, $\cA_{\mathrm{int}}$, and $\cA_{\mathrm{q}}$ index nonzero linear main effects, nonzero two-factor interactions, and nonzero quadratic terms. The corresponding factor-level active set is
$\cA = \cA_{\mathrm{m}} \cup \{j : (j,\ell) \in \cA_{\mathrm{int}}\}
      \cup \{\ell : (j,\ell) \in \cA_{\mathrm{int}}\}
      \cup \cA_{\mathrm{q}}$,
with $|\cA| = k_{\text{true}}$.
Let
\begin{equation}
  \mathbf{x}^\star
  = \arg\max_{\mathbf{x} \in [-1, +1]^p} f(\mathbf{x})
  \label{eq:xstar}
\end{equation}
denote the true optimum. The practical objective is to return an estimated setting $\hat{\mathbf{x}}$ with small regret relative to $\mathbf{x}^\star$, using few runs in total. Writing $N_{\mathrm{S1}}$ and $N_{\mathrm{S2}}$ for the numbers of Stage 1 and Stage 2 runs, the total $N = N_{\mathrm{S1}} + N_{\mathrm{S2}}$ is an outcome of the adaptive procedure, set by how many screening rounds the data require, not a figure fixed in advance. Stage 1 screens at the $\pm 1$ boundary, and Stage 2 augments and optimizes the second-order surface over the continuous box $[-1, +1]^{k_{\mathrm{sel}}}$ on the selected coordinates, where $k_{\mathrm{sel}}$ is the number of factors Stage 1 selects. We measure optimization quality by the optimality loss
\begin{equation}
  \mathrm{LOSS}
  = \max\!\bigl(0,\; f(\mathbf{x}^\star) - f(\hat{\mathbf{x}})\bigr).
  \label{eq:loss}
\end{equation}
In exact arithmetic the quantity inside the maximum is nonnegative. The outer maximum prevents small numerical overshoots from being reported as negative loss.

\subsection{The \texorpdfstring{$\Var(s_+^{\mathrm{W}})$}{Var(s+W)}
  Criterion}
\label{sec:welch}

Stage 1 collects its runs in small rounds, and each round's new runs are constructed under a design criterion, which we now present. The design quality at each round is governed by the column-pair inner products of the cumulative two-level design matrix. Let $X \in \{-1,+1\}^{n
\times p}$ denote the cumulative design, and let $\mathbf{x}_j \in \{-1,+1\}^n$ denote its $j$-th column (the $n$-vector of design levels for factor $j$). Define the column-pair inner products
\begin{equation}
  s_{ij} \;=\; \mathbf{x}_i^{\!\top}\mathbf{x}_j
        \;=\; \sum_{r=1}^{n} X_{r,i}\,X_{r,j},
  \qquad i, j \in \{1,\ldots,p\},\ i \ne j,
  \label{eq:sij}
\end{equation}
and write $M = \binom{p}{2}$ for the number of unordered column pairs. Throughout we use the explicit notation
\[
  \overline{s^2} \;\equiv\; \frac{1}{M}\sum_{i<j} s_{ij}^2,
  \qquad
  (\bar s)^2 \;\equiv\; \Bigl(\frac{1}{M}\sum_{i<j} s_{ij}\Bigr)^{\!2},
\]
where $\bar s = M^{-1} \sum_{i<j} s_{ij}$ is the average off-diagonal inner product, so that $\Var(s) = \overline{s^2} - (\bar s)^2$ is the standard variance identity. A well-conditioned SSD has small $|s_{ij}|$ for all off-diagonal pairs.

\citet{weese2021strategies} introduced $\Var(s_+)$, the variance of the positive-cone inner products, as a design criterion for two-level SSDs. Their formulation minimizes $\Var(s_+) = \overline{s^2} - (\bar s)^2$ subject to a mean-positivity constraint $\bar s > 0$ and an efficiency constraint
$\overline{s^2}^* / \overline{s^2} \ge 0.8$,
where $\overline{s^2}^*$ is computed from the augmented model matrix $L = (\mathbf{1} \,|\, X)$. 
The same ideal, all column pairs sharing a positive correlation, is connected to sign recovery by \citet[Theorem 2]{stallrich2025optimal}. Under their known-sign construction, a symmetric two-level SSD with common positive column-pair correlations is optimal for lasso sign recovery when such a construction is available. We use this as a qualitative target and calibrate the squared mean
off-diagonal inner product using $(s^\star)^2$, where
$s^\star=\sqrt{n(p-n)/(p-1)}$ is the Welch lower bound on $\max_{i\ne j}|s_{ij}|$ for $p>n$.

We propose a simpler criterion calibrated by the \citet{welch1974lower} lower bound on the maximum off-diagonal inner product. For $p$ columns of
$\pm 1$ entries (each with squared norm $n$), Welch's bound states
\begin{equation}
  \max_{i \ne j} s_{ij}^2
  \;\ge\;
  \frac{n(p - n)}{p - 1}, \qquad p > n,
  \label{eq:welch_bound_intro}
\end{equation}
with equality when $|s_{ij}|=s^\star$ for every $i\ne j$.
We define the inverse squared scale:
\begin{equation}
  \eta_{\mathrm{W}}(n, p) =
  \begin{cases}
    \displaystyle \frac{p - 1}{n(p - n)}, & p > n, \\[6pt]
    1, & p \le n,
  \end{cases}
  \label{eq:eta}
\end{equation}
chosen so that, when every $s_{ij}$ equals the Welch saturating magnitude, $\eta_{\mathrm{W}}(n,p)\,(\bar s)^2 = 1$ exactly (Appendix \ref{app:welch_calibration}). For $p \le n$, the
saturated regime degenerates to the orthogonal-design limit, so the weight is set to $1$ and, on the positive cone, $\Var(s_+^{\mathrm{W}})$
reduces to the classical near-orthogonality objective
$E(s^2) = \overline{s^2}$ \citep{booth1962some}, minimized by an orthogonal design. The $\Var(s_+^{\mathrm{W}})$ criterion is
\begin{equation}
  \Var(s_+^{\mathrm{W}})
  \;=\;
  \underbrace{\overline{s^2} - (\bar s)^2}_{\text{variance term}}
  \;+\;
  \underbrace{\eta_{\mathrm{W}}(n,p)\,(\bar s)^2}_{\text{Welch-calibrated level penalty}}
  \;+\;
  \underbrace{\lambda \sum_{i < j} \min(s_{ij}, 0)^2}_{\text{positive-cone penalty}},
  \label{eq:welch_crit}
\end{equation}
with $\lambda = 10^6$ a large quadratic penalty driving the
off-diagonals toward the positive cone $\{s_{ij} \ge 0 : i \ne j\}$.
Empirically, $89$--$98\%$ of recovered designs across the simulation grid satisfy $\min_{i < j} s_{ij} \ge 0$ at coordinate-exchange convergence (Appendix \ref{app:cone_validation}). The criterion is minimized by coordinate exchange. The round-by-round implementation is given in Appendix \ref{app:criterion}.

\paragraph{Calibration parameter, not a tuning parameter.} The Welch weight $\eta_{\mathrm{W}}$ is determined by $(n,p)$.
We use $\lambda=10^6$ as the default positive-cone penalty weight. Across the five weights $\{10^4,\ldots,10^8\}$, the reported metrics vary little at the setting examined in Appendix \ref{app:lambda_sweep}.

For $p\le n$, $\eta_{\mathrm{W}}=1$ and the first two terms of
the criterion reduce to $E(s^2)$. The positive-cone penalty discourages negative inner products, while the level penalty penalizes $(\bar s)^2$ on the Welch scale. A design cannot in general meet the Welch bound with all-positive off-diagonals when $p>n$, so the criterion acts as a Welch-scaled positive-cone
heuristic.

\paragraph{Positive cone versus near-orthogonality.}
Classical SSD construction targets near-orthogonality by minimizing
$E(s^2) = \overline{s^2}$ \citep{booth1962some} or $|s|_{\max}$ alone, accepting mixed-sign $s_{ij}$ as a natural consequence. The positive-cone preference of $\Var(s_+^{\mathrm{W}})$ is more aligned
with the LassoCV importance scoring used in Stage 1. If an inactive factor is correlated with an active one through a mixed-sign correlation, the LassoCV fit can give the inactive factor a spurious nonzero coefficient. Keeping the correlations all the same sign makes the fit more likely to shrink that spurious coefficient back to zero. This is a heuristic alignment between design and analysis. In the criterion ablation in Appendix \ref{app:abl}, the Welch criterion attains the lowest end-to-end LOSS, ahead of the best retained alternative by $0.65$ and $0.32$ in the two simulated cases.

\paragraph{Criterion scope.}
The criterion as stated controls inner products between the main-effect columns. Quadratic effects are unidentifiable on the $\pm 1$ Stage 1 design, where every squared column $x_j^2 \equiv 1$ coincides with the intercept, so they are deferred to Stage 2. A known active interaction can instead be protected by augmenting the model matrix with its $\pm 1$ product column (the $\varphi^+$ extension, with formula and design-quality pilot in Appendix \ref{app:phiplus}).

\subsection{The Two-Stage Procedure}
\label{sec:procedure}

With the model and the design criterion in place, we now describe the two stages.

\subsubsection{Stage 1: Sequential Screening with Graduated Quantile
  Elimination}
\label{sec:stage1}

\begin{figure}[!htbp]
\centering
\begin{tikzpicture}[
    >=Latex,
    box/.style={rectangle, rounded corners=2pt, draw=blue!55!black,
      thick, fill=blue!6, minimum width=22mm, minimum height=12mm,
      align=center, font=\footnotesize, inner sep=2pt},
    hand/.style={rectangle, rounded corners=2pt, draw=green!45!black,
      dashed, fill=green!4, align=center, font=\footnotesize, inner sep=3pt},
    arrow/.style={->, line width=0.7pt, draw=blue!55!black},
  ]
  \node[box] (a) {$p$ candidate\\factors};
  \node[box, right=8mm of a] (b) {Coordinate\\exchange design};
  \node[box, right=8mm of b] (c) {Fit model,\\score $|\hat\beta_j|$};
  \node[box, right=8mm of c] (d) {Graduated\\quantile cut};
  \node[box, right=8mm of d, fill=blue!12] (e) {Selected set\\$\hat\cA$ ($k_{\mathrm{sel}}$)};

  \draw[arrow] (a) -- (b);
  \draw[arrow] (b) -- (c);
  \draw[arrow] (c) -- (d);
  \draw[arrow] (d) -- (e);

  \draw[arrow, dashed] (d.north) -- ++(0,6mm) -| (b.north)
    node[pos=0.25, above, font=\scriptsize, text=black]
      {repeat over rounds $t = 1, 2, \ldots$ until the set is stable};

  \node[font=\footnotesize\bfseries, text=blue!55!black,
        below=4mm of b.south] {Stage 1 (sequential)};

  \node[hand, below=12mm of e, minimum width=22mm, minimum height=12mm] (f) {to Stage 2};
  \draw[arrow, draw=green!45!black, dashed] (e) -- (f);
\end{tikzpicture}
\caption{The Stage 1 screening loop of $\Sthree$.}
\label{fig:flow}
\end{figure}

Stage 1 repeats a loop of screening and elimination over rounds $t = 1, 2, \ldots$. Figure~\ref{fig:flow} sketches the loop. Each screening round runs the following steps:
\begin{enumerate}[leftmargin=*, topsep=2pt, itemsep=2pt]
  \item \textbf{Allocate.} Round $t$ starts from the current candidate set $\cC_t \subseteq [p]$, initialized as $\cC_1 = [p]$, and its fixed subset $\cF_t \subseteq \cC_t$, initialized as $\cF_1 = \emptyset$, which holds the factors whose importance and sign of effect have already been committed. The open set $\cO_t = \cC_t \setminus \cF_t$ holds the candidates still under assessment. The round adds $n_t = \max(2, \lceil c\,|\cO_t| \rceil)$ new runs for a cost ratio $c \in (0, 1)$, bringing the cumulative run count to $N_t = \sum_{r \le t} n_r$.
  \item \textbf{Design.} The new runs are built by coordinate exchange \citep{meyer1995coordinate} under the $\Var(s_+^{\mathrm{W}})$ criterion of Section \ref{sec:welch}, evaluated on the cumulative design, all runs so far together with the new round, restricted to the current candidates, with $\eta_{\mathrm{W}}$ taken at the cumulative dimensions $(N_t, |\cC_t|)$. Only the new round's entries for open factors are exchanged. Earlier runs stay as collected, and each fixed factor enters every new run at its committed level, the $\pm 1$ value assigned when the factor was fixed (step 5). In practice, eliminated factors can be set to their default or cheapest levels. Here, we set them to zero in all subsequent runs.
  \item \textbf{Fit and score.} All $N_t$ runs are pooled and a regularized linear (LassoCV) model with round-block indicators $Z_{i,r} = \mathbf{1}[\text{run } i \text{ collected in round } r]$ is fitted:
\begin{equation}
  y_i \;=\; \alpha
        + \mathbf{x}_{\cC_t,i}^\top \bm{\beta}_{\cC_t}
        + \sum_{r=2}^{t} \gamma_r Z_{i,r}
        + e_i.
  \label{eq:surrogate}
\end{equation}
Each factor $j$ is scored by the absolute LassoCV coefficient
\begin{equation}
  I_j \;=\; |\hat\beta_j|,
  \label{eq:impscore}
\end{equation}
which vanishes whenever $\hat\beta_j = 0$, and the sign of each nonzero coefficient is recorded as that factor's current direction.
  \item \textbf{Threshold.} The elimination quantile at round $t$ follows the graduated quantile (GQ) rule $q_t = q \cdot \min(1, N_t / |\cC_t|)$, where $q \in (0, 1)$ is the base elimination quantile. 
  When $N_t < |\cC_t|$, $q_t < q$ and scales linearly with $N_t / |\cC_t|$, the runs collected per candidate, which helps protect against premature elimination, and the full rate $q$ is reached once $N_t / |\cC_t| \ge 1$. The threshold $\tau_t$ is the $q_t$-quantile of the scores of the currently open factors, and every open factor with $I_j \le \tau_t$ is marked for elimination. Whether a marked factor is actually removed is settled in step 5, where fixing takes precedence.
  \item \textbf{Fix and update.} The scores of all open factors are sorted decreasingly and the elbow $r^\star = \arg\max_r (I_{(r)} - I_{(r+1)})$ marks the largest gap between consecutive sorted scores. The top $r^\star$ open factors that have a recorded direction are committed to the fixed set, each locked at the $\pm 1$ level given by its direction. If only one open factor remains, it is committed directly. Marked factors outside the fixed set are then eliminated, giving the updated sets $\cC_{t+1}$ and $\cF_{t+1}$.
\end{enumerate}
Stage 1 stops when the fixed set does not grow during a round with $t \ge 2$, when the open set is empty, or when the round index reaches $S_{\max}$ (default $10$), and returns the final fixed set with its committed signs as the selected set $\hat\cA$, of size $k_{\mathrm{sel}} = |\hat\cA|$.

The cost ratio $c$ and the elimination quantile $q$ are the only operating choices in the pipeline, and Section \ref{sec:sensitivity} maps a safe operating region for them. The design criterion itself has no constant for the practitioner to set.

Figure \ref{fig:elimination_trace} illustrates the Stage 1 loop on one replication of a main-effect screening problem at $p = 80$,
$k_{\mathrm{true}} = 5$ active factors, and $(c, q) = (0.50, 0.65)$. The first round (Round 1) draws
$n_1 = \lceil 0.50 \cdot 80 \rceil = 40$ runs over the full pool $|\cC_1| = 80$ and applies the graduated rate $q_1 = 0.65 \cdot (40/80) = 0.325$. It fixes the two strongest actives, $x_0$ and $x_1$, and eliminates $64$ inactive factors. The realized count ($64$ of $80$) far exceeds the nominal $32.5\%$ because the LassoCV model drives most inactive coefficients to exactly zero in this round ($n_1 = 40 < |\cC_1| = 80$). The $0.325$-quantile of the importance scores is then itself zero, so every zero-scored factor is
cut at once. Round 2 adds $7$ runs over the surviving $16$ candidates at the full rate $q_2 = 0.65$, eliminating $9$ more and fixing $x_2$ and $x_3$. Round 3 adds $2$ runs over the remaining $7$
candidates, fixes the last active $x_4$ at the elbow, and removes the two surviving inactive factors. The run recovers all five actives
($\hat\cA = \cA$) with $N_{\mathrm{S1}} = 49$ Stage 1 runs. A one-shot LassoCV at the same $p = 80$ typically selects around $19$ factors, and a full second-order model in $19$ factors has $210$ coefficients (Section \ref{sec:stage2}).

\begin{figure}[!htbp]
\centering
\includegraphics[width=0.85\textwidth]{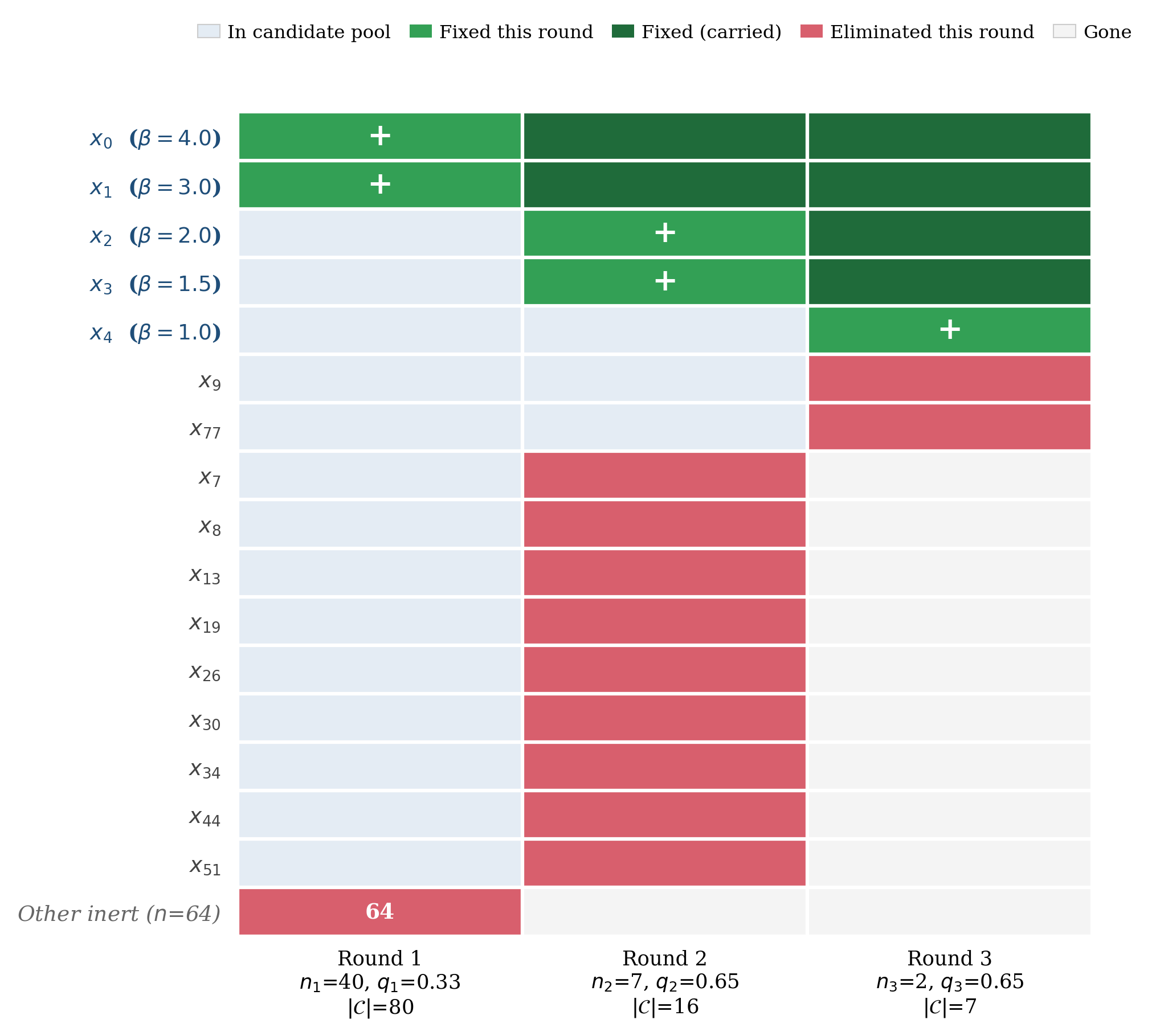}
\caption{Round-by-round status grid for one replication of a main-effect screening problem at $p = 80$, $k_{\mathrm{true}} = 5$,
$(c, q) = (0.50, 0.65)$, with true active factors $x_0$--$x_4$.}
\label{fig:elimination_trace}
\end{figure}

\begin{remark}[Scorer-agnostic design]
\label{rem:scorer}
The GQ elimination framework is the core algorithmic contribution and the importance scorer is a plug-in. The default $I_j = |\hat\beta_j|$ uses the absolute LassoCV coefficient and is adopted for its simplicity. The framework also accommodates a range of alternative scorers, from model-free dependence measures such as distance correlation \citep[dCor,][]{szekely2007measuring,li2012feature}
and Ball correlation \citep[BCor,][]{pan2020ball} to random forest importances such as SHAP \citep{lundberg2017unified} and Gini
\citep{breiman2001random}. Appendix \ref{app:scorer} demonstrates the swap on an interaction-dominated response, where the default linear model
fails and screening quality tracks the chosen scorer while the rest of the pipeline is held fixed.
\end{remark}

\paragraph{Mis-fixation diagnostic.}
Locking a factor at the wrong sign, or committing a noise factor, could in principle propagate errors. Early noise commitment is rare, but terminal noise inclusion is common in the weak-signal regime and is
bounded by the small final shortlist. Appendix \ref{app:misfix} shows that most noise commitment occurs only after the open inactive set has
already contracted.

\subsubsection{Stage 2: Response Surface Optimization}
\label{sec:stage2}

A full second-order model in the $k_{\mathrm{sel}}$ selected
factors has
\begin{equation}
  p_{\mathrm{poly}}(k_{\mathrm{sel}})
  = 1 + 2k_{\mathrm{sel}} + \binom{k_{\mathrm{sel}}}{2}
  \label{eq:ppoly}
\end{equation}
parameters. We reuse the Stage 1 runs on the selected factors when fitting the Stage 2 model. Larger selected sets typically require more Stage 2 runs. 

Standard one-shot defaults select much larger sets. Appendix \ref{app:baselines} shows that LassoCV and the other default one-shot rules return large shortlists across the sampled $n/p$ range, and that a stricter penalty does not repair this, since it trades Type I error for Type II error and can drop active factors that Stage 2 needs. A one-shot rule can be forced to return a fixed top-$k$ shortlist, but that cap must be chosen externally, before any data are seen. The $\Sthree$ shortlist size is instead set by the data, through the graduated elimination and the elbow rule. Beyond keeping the Stage 2 model small, $\Sthree$ selects far fewer false positives, shown in Section \ref{sec:sim}.

Let $\mathcal A_2\subseteq\hat\cA$ denote the factors carried into
Stage 2 after any capping or matching. Stage 2 is skipped when $|\mathcal A_2|<2$. Otherwise, let $r_1$ denote the rank of the
full second-order model matrix of the cumulative Stage 1 design restricted to $\mathcal A_2$. We augment this design with
$N_{\mathrm{S2}}=\max(0,p_{\mathrm{poly}}(|\mathcal A_2|)-r_1+5)$ new runs. Since $r_1\le p_{\mathrm{poly}}(|\mathcal A_2|)$, at least five runs are added.

We select these runs by greedy forward $D$-optimal augmentation. The candidate pool is the full $\{-1,0,+1\}^{|\mathcal A_2|}$ grid when $3^{|\mathcal A_2|}\le2000$, and otherwise consists of
2000 points drawn at random from this grid. At each step, we sample up to 300 points from the pool and add the one giving the largest log determinant of the augmented information matrix, with $10^{-8}$ added to each diagonal entry.

The combined Stage 1 and Stage 2 data are fitted by OLS with BIC-guided forward selection. For $\Sthree$, the Stage 1 block indicators (block history $\mathbf B$) and a Stage 2 indicator are retained unconditionally as nuisance terms. We retain the linear main effects of all factors in $\mathcal A_2$ to preserve effect hierarchy. BIC selection is applied only to interaction
and quadratic terms.

We omit the block terms from the fitted model to obtain the response surface $\hat\mu$ on $[-1,1]^{|\mathcal A_2|}$.
We use the stationary point when $\hat\mu$ has a negative definite Hessian and this point lies inside the box. Otherwise we use
box-constrained numerical optimization \citep[L-BFGS-B,][]{byrd1995limited}.
The paired comparisons in Section \ref{sec:metrics} use the same augmentation and polynomial selection rules at the matched factor count.

We form the full setting $\hat{\mathbf{x}}\in[-1,1]^p$ using the optimized values for the Stage 2 factors. Fixed factors omitted by capping or matching retain their committed levels, and all other omitted factors are set to zero.
We evaluate this setting using the optimality loss \eqref{eq:loss}. Appendix \ref{app:fullalg} gives the complete two-stage procedure.

\section{Theoretical Properties}
\label{sec:theory}

Two results support the two-stage design. The first bounds the probability that an inactive factor is ever committed into the selected set, separating the event that an inactive factor survives an elimination round from the event that the elbow rule commits it. The second bounds the total number of Stage 1 runs. The supporting elimination-rate proposition, an observed-size false discovery proportion (FDP) corollary, the round-level diagnostic table, and all proofs are in Appendix \ref{app:proofs}.

Let $\cA \subset [p]$ with $|\cA| = k$ denote the active set, $\cN = [p] \setminus \cA$ the inactive set, whose elements we also call noise factors, and $\cF \subseteq [p]$ the final fixed set returned by $\Sthree$ with $|\cF| = k_{\mathrm{sel}}$. The
procedure executes at most $S_{\max}$ rounds. At round $t$ the candidate set is $\cC_t$, the still-open inactive set is $\cO_t^{\cN} = \cN \cap \cO_t$, and $q_t$ and $\tau_t$ are the round's elimination rate and threshold of Section \ref{sec:stage1}. We write $\cG_t = \sigma(X_{1:t}, \mathbf{y}_{1:t})$ for the filtration generated by the first $t$ rounds of data. For an inactive factor $j$, let $H_t(j)$ denote the event that round $t$ is executed with $j$ open at its start and the score of $j$ exceeds $\tau_t$, and let $C_t(j)$ denote the event that round $t$ is executed and the elbow rule commits $j$. If Stage 1 stops before round $t$, neither event occurs. We can therefore sum over $t=1,\ldots,S_{\max}$ in the
probability bounds. We also write $\{j \in \cO_t\}$ for the event that round $t$ is executed with $j$ still open at its start.

\begin{assumption}[Open-set survival and commitment]
\label{as:noise}
For each inactive factor $j \in \cN$, each round $t$, and every event $B \in \cG_{t-1}$ satisfying $B \subseteq \{j \in \cO_t\}$ and
$\bbP(B)>0$,
\begin{equation}
  \bbP\bigl(H_t(j) \bigm| B\bigr)
  \;\le\; \rho_t,
  \label{eq:noise_surv}
\end{equation}
and
\begin{equation}
  \bbP\bigl(C_t(j) \bigm| B\bigr)
  \;\le\; \delta_t.
  \label{eq:commit_ctrl}
\end{equation}
The constants $\rho_t$ and $\delta_t$ may depend on the fixed design protocol, the budget rule, and the data-generating regime, but not on the identity of $j$ within the inactive set.
\end{assumption}

The survival part of Assumption \ref{as:noise} bounds the
probability that an individual inactive factor exceeds the
GQ threshold, conditional on the earlier data. The deterministic contraction result below bounds the total
number of open factors. The commitment part is separate. It controls the chance that an inactive factor that remains open is placed in the elbow prefix and locked into the downstream model. If every factor surviving the first round were committed, the first-round survival rate would be the best attainable
bound, so survival control alone cannot deliver multi-round decay. For the final fixed set $\cF$, the false discovery proportion is $\mathrm{FDP} = |\cF \cap \cN| / (|\cF| \vee 1)$, the fraction of selected factors that are noise.

\begin{proposition}[Round-by-round inactive-commitment bound]
\label{prop:inactive_commitment}
Under Assumption \ref{as:noise}, with rounds capped at $S_{\max}$ and unexecuted rounds contributing empty events, for any inactive factor $j \in \cN$,
\begin{equation}
  \bbP(j \in \cF)
  \;\le\;
  \sum_{t=1}^{S_{\max}} \delta_t \prod_{s<t} \rho_s.
  \label{eq:inactive_commitment}
\end{equation}
Consequently,
\begin{equation}
  \bbE\bigl[|\cF \cap \cN|\bigr]
  \;\le\;
  (p - k) \sum_{t=1}^{S_{\max}} \delta_t \prod_{s<t} \rho_s.
  \label{eq:efp}
\end{equation}
The bound controls inactive-factor commitment into Stage 2 only.
\end{proposition}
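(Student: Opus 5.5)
The plan is a first-passage decomposition over rounds. A factor enters the fixed set only through the elbow rule, or through the single-remaining-factor commitment, which I fold into $C_t(j)$ as a commitment at round $t$. So $\{j\in\cF\}\subseteq\bigcup_{t=1}^{S_{\max}} C_t(j)$. Moreover, $C_t(j)$ can only happen while $j$ is open at the start of round $t$, so $C_t(j)\subseteq\{j\in\cO_t\}$.

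The key structural observation is how $j$ stays open. If $j\in\cN$ is open at the start of round $t\ge2$, it was open at the start of every earlier round $s<t$. At each such round it was neither committed nor marked for elimination, since an open factor with $I_j\le\tau_s$ is eliminated unless it is fixed. Hence
\[
\{j\in\cO_t\}\subseteq\bigcap_{s<t}H_s(j)\cap\{j\in\cO_s\}.
\]
Each of these events is $\cG_{t-1}$-measurable. I would also check that the round-$t$ open status is determined by the data through round $t-1$, so that $\{j\in\cO_t\}\in\cG_{t-1}$. This holds because the design of round $t$ depends only on past data and possibly auxiliary randomness, which I would absorb into the filtration.

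The main step bounds the survival probability by induction on $t$:
\[
\bbP\bigl(j\in\cO_t\bigr)\le\prod_{s<t}\rho_s .
\]
The base case is trivial. For the inductive step, write $\bbP(j\in\cO_{t+1})\le\bbP(H_t(j)\cap\{j\in\cO_t\})$. Apply the survival part \eqref{eq:noise_surv} with $B=\{j\in\cO_t\}\in\cG_{t-1}$ to get the bound $\rho_t\,\bbP(j\in\cO_t)$. If $\bbP(B)=0$, the bound holds trivially.

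Then the commitment part \eqref{eq:commit_ctrl}, applied with the same $B$, gives
\[
\bbP(C_t(j))=\bbP\bigl(C_t(j)\mid j\in\cO_t\bigr)\,\bbP(j\in\cO_t)\le\delta_t\prod_{s<t}\rho_s .
\]
A union bound over $t=1,\dots,S_{\max}$ yields \eqref{eq:inactive_commitment}. Unexecuted rounds give empty events, so including them in the sum costs nothing.

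For \eqref{eq:efp}, write $|\cF\cap\cN|=\sum_{j\in\cN}\mathbf 1[j\in\cF]$ and take expectations. The bound is uniform in $j$ by the assumption, and $|\cN|=p-k$, which gives the result. The final sentence of the proposition is interpretive: only commitment is bounded, not survival to the end. It needs no proof beyond noting that $\cF$ is the fixed set passed to Stage 2.

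The main obstacle is measurability bookkeeping. I need $\{j\in\cO_t\}\in\cG_{t-1}$, and I need the survival chain to be exact. That is, I must confirm that open-but-unmarked at round $s$ is exactly $H_s(j)$ minus commitment, and that a factor both committed and marked counts as committed rather than surviving. Fixing takes precedence, so such a $j$ leaves the open set via $C_s(j)$, which only helps the inclusions above.
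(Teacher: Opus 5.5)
Your proof is correct and follows essentially the same route as the paper: you bound the survival chain round by round with the survival part of Assumption~\ref{as:noise}, apply the commitment part once at the commitment round, and sum over $t \le S_{\max}$. The differences are cosmetic. You condition on $\{j\in\cO_t\}$ directly, which already encodes $\bigcap_{s<t}H_s(j)$, where the paper conditions on $E_{t-1}^{(t)}\cap\{j\in\cO_t\}$. You also use a union bound where the paper sums over the disjoint first-commitment events $\{T_j=t\}$; the two coincide, since $j$ can be committed at most once.
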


\begin{remark}[Clean ladder as a special case]
If $\delta_t = 0$ for every $t < s_0$, the sum in \eqref{eq:inactive_commitment} starts at $s_0$. An inactive factor can then be committed only after surviving all rounds before $s_0$, so each remaining term includes the survival product $\prod_{s<s_0}\rho_s$. The diagnostics in Appendix \ref{app:noise_diagnostics} report
small but nonzero first-round commitment rates. The proposition therefore retains the general $\delta_t$ terms.
\end{remark}

\begin{proposition}[Stage 1 run bound]
\label{prop:budget}
Under GQ screening with the default LassoCV scorer and parameters $c, q \in (0, 1)$ starting from
$|\cO_1| = p \ge 2$ open candidates, the total number of Stage 1 runs over $S$ executed rounds satisfies $N_{\mathrm{S1}} \le p / q + 2S$, with $S \le S_{\max}$. For fixed $c$ and $q$, the round bound in Appendix \ref{app:proofs} gives $S=O(\log p)$, and hence $N_{\mathrm{S1}}\le p/q+O(\log p)$.
\end{proposition}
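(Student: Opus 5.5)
The plan is to bound each round's allocation by a term proportional to the open-set size plus a constant. Then I would show that the open set contracts by at least a fixed fraction per round, and telescope. First, since $n_t=\max(2,\lceil c\,|\cO_t|\rceil)$ and $\max(2,\lceil x\rceil)\le x+2$ for every $x\ge 0$, we get
\[
  N_{\mathrm{S1}}=\sum_{t=1}^{S} n_t \;\le\; c\sum_{t=1}^{S}|\cO_t| \;+\; 2S .
\]
So it suffices to prove $c\sum_{t=1}^S|\cO_t|\le p/q$. This is a deterministic statement about the algorithm and uses nothing about the noise or Assumption~\ref{as:noise}.

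\textbf{Per-round contraction.} At round $t$, every open factor with $I_j\le\tau_t$ is marked. In step 5 each marked factor is either eliminated or committed to $\cF_{t+1}$, and in both cases it leaves the open set. Hence $|\cO_t|-|\cO_{t+1}|\ge m_t$, where $m_t$ is the number of open scores at or below the $q_t$-quantile; this holds whichever of the two happens. I would take the quantile convention to guarantee $m_t\ge q_t|\cO_t|$, up to rounding; the rounding is treated below.

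The key observation is a uniform lower bound on $q_t$. Cumulative runs satisfy $N_t\ge n_1\ge c\,|\cO_1|=cp$, and $|\cC_t|\le p$. Therefore $N_t/|\cC_t|\ge c$, and since $c<1$,
\[
  q_t=q\min\bigl(1,N_t/|\cC_t|\bigr)\ge qc .
\]
This gives $|\cO_t|-|\cO_{t+1}|\ge qc\,|\cO_t|$, equivalently $c|\cO_t|\le q^{-1}\bigl(|\cO_t|-|\cO_{t+1}|\bigr)$. Summing over $t=1,\dots,S$ telescopes to $c\sum_t|\cO_t|\le q^{-1}\bigl(|\cO_1|-|\cO_{S+1}|\bigr)\le p/q$, which yields $N_{\mathrm{S1}}\le p/q+2S$. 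The bound $S\le S_{\max}$ is immediate from the stopping rule.

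\textbf{Round count.} The same contraction gives $|\cO_{t+1}|\le(1-qc)|\cO_t|$, hence $|\cO_t|\le(1-qc)^{t-1}p$. The open set therefore falls below $2$ after at most $1+\log p/(-\log(1-qc))$ rounds. Once one open factor remains, it is committed directly, and the open set becomes empty after one further round. This gives $S=O(\log p)$ for fixed $c,q$, and so $N_{\mathrm{S1}}\le p/q+O(\log p)$.

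\textbf{Main obstacle.} The step I expect to be delicate is the claim $m_t\ge q_t|\cO_t|$. An interpolated empirical quantile may leave only $\lfloor q_t|\cO_t|\rfloor$ scores at or below $\tau_t$, and for small open sets this can even be zero. I would first try to fix a lower-quantile (order-statistic) convention, $\tau_t=I_{(\lceil q_t|\cO_t|\rceil)}$ in increasing order, under which $m_t\ge\lceil q_t|\cO_t|\rceil$ holds exactly.

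If the implemented convention only gives the floor, I would absorb the deficit of less than one factor per round into the slack. The $\max(2,\cdot)$ and ceiling already contribute at most $2$ per round, and $c<1$, so the additive term should still be at most $2S$. I would verify this carefully, possibly by handling separately the rounds with $|\cO_t|<1/(qc)$, where the elbow rule commits at least one factor with a recorded direction.
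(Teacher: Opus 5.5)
Your skeleton matches the paper's proof. You bound each round by $n_t\le c|\cO_t|+2$, derive the uniform bound $q_t\ge qc$ from $N_t\ge cp$ and $|\cC_t|\le p$, establish a per-round contraction, and sum. Your telescoping of $qc|\cO_t|\le|\cO_t|-|\cO_{t+1}|$ is equivalent to the paper's geometric series. Under the order-statistic convention $\tau_t=I_{(\lceil q_t|\cO_t|\rceil)}$ your argument is complete, and there you do not even need the elbow.

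The gap is the contraction under the convention the paper's supporting proposition actually assumes, linear-interpolation quantiles, which you leave unresolved. With $m=|\cO_t|$, interpolation guarantees only $\lfloor (m-1)q_t\rfloor+1$ scores at or below $\tau_t$. This count is never zero, since $\tau_t\ge I_{(1)}$, but it can fall short of $qc\,m$ at any size, not just for small open sets. For example, at $p=40$, $c=0.55$, $q=0.65$ we have $q_1=qc=0.3575$. With distinct scores, exactly $\lfloor 39\cdot 0.3575\rfloor+1=14$ scores lie at or below $\tau_1$, against $qc\,p=14.3$.

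Your absorption fallback does not close this. A per-round deficit $d_t<1$ enters your telescoping as $c\sum_t|\cO_t|\le(p+\sum_t d_t)/q$, which is up to nearly $S/q$ extra runs. Meanwhile the additive $2S$ is already what $\max(2,\lceil x\rceil)\le x+2$ costs, and that inequality is nearly tight when $c|\cO_t|$ is small. So the claim that the additive term stays at most $2S$ does not follow from your bookkeeping.

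The missing idea is that the elbow supplies one extra removal above the threshold in \emph{every} round, not only when $|\cO_t|<1/(qc)$. Since $r^\star\ge 1$, the elbow prefix always contains the top-scoring open factor. If any open score exceeds $\tau_t\ge 0$, that top score is positive, so its sign is recorded and it is committed. It is also not among the factors at or below $\tau_t$. Hence, for $m\ge 2$,
\[
  |\cO_{t+1}|\le m-\bigl(\lfloor (m-1)q_t\rfloor+1\bigr)-1\le m-(m-1)q_t-1=(m-1)(1-q_t)\le(1-qc)\,m .
\]
If no open score exceeds $\tau_t$, the whole open set leaves. For $m=1$ the single open factor leaves either way. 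With this contraction in hand, your telescoping and your round count go through unchanged.
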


\section{Simulation Study}
\label{sec:sim}

\subsection{Simulation Setup}
\label{sec:setup}

We study two signal configurations, both with $|\cA| = 5$ active factors ($x_0, \ldots, x_4$) among $p$ candidates and the remaining factors inactive. \textbf{Case 1} has strong main effects
($\beta_{\max} = 4.0$, main-effect dominated). \textbf{Case 2} has weaker main effects ($\beta_{\max} = 2.0$, weak-signal regime). Both share the same interaction and quadratic structure
($\beta_{01} = 2$, $\beta_{12} = -1.5$, $\beta_{00} = -1.5$, $\beta_{33} = 1$). 
At $p=80$, we also vary $k_{\mathrm{true}}$ from $4$ to $12$ in both cases. Each setting uses the first $k_{\mathrm{true}}$ main effect coefficients and the corresponding interaction and quadratic terms specified in Appendix \ref{app:simulation_details}. This examines how screening and optimization performance change with the number of active factors.

We run $p \in \{40, 80, 120, 200\}$ candidates at
$k_{\mathrm{true}} = 5$, plus the sparsity sweep above at $p = 80$, with Gaussian noise $\sigma = 1.0$ and a batch-effect standard deviation $\sigma_b = 1.0$ at batch size 10. The two pipeline operating parameters span a $7 \times 9 = 63$-cell grid, $c \in \{0.35, \ldots, 0.65\}$ (7 values) and $q \in \{0.40, \ldots, 0.80\}$ (9 values), per $(k, p)$. Case 1 runs four arms: one is $\Sthree$, and the other three are one-shot LassoCV arms whose two-level designs are built with the $\Var(s_+^{\mathrm{W}})$, $\Var(s_+)$, and $E(s^2)$ design criteria, tagged l\_welch, l\_varsplus, and l\_es2, as distinct from the round-by-round LassoCV model inside $\Sthree$. Case 2 drops l\_es2 and runs three arms. The $\Sthree$ arm uses $\Var(s_+^{\mathrm{W}})$ throughout. These criteria are the one-shot design comparators retained after the criterion ablation in Appendix \ref{app:criterion}. We use $S_{\max} = 10$ rounds and $R = 500$ replications per cell.

\subsection{Evaluation Metrics and Protocol}
\label{sec:metrics}

We evaluate each arm on two axes. Screening quality uses
Type I error, Type II error, the $F_1$ score, and the Matthews
correlation coefficient (MCC). Optimization quality uses the
LOSS of \eqref{eq:loss}, written $\mathrm{LOSS}_{\Sthree}$ for $\Sthree$ and $\mathrm{LOSS}_{\mathrm{l\_welch}}$ for the l\_welch baseline (and analogously for the other LassoCV arms). The main comparison uses the paired gap
$\Delta\mathrm{LOSS}=\mathrm{LOSS}_{\Sthree}
-\mathrm{LOSS}_{\mathrm{l\_welch}}$
at the matched factor count defined below.
For the heatmaps, $\Sthree$ uses its own selected set capped at
15 factors, and each LassoCV arm uses its corresponding matched
factor count. Within each parameter cell, we calculate
$\Delta\mathrm{LOSS}_{\min}
=\overline{\mathrm{LOSS}}_{\Sthree}
-\min_{\mathrm{La}}\overline{\mathrm{LOSS}}_{\mathrm{La}}$,
where each mean uses that arm's completed Stage 2 optimizations.
Screening summaries use all $R=500$ replications.
Paired gaps require completion by both arms.
With more inactive than active factors, $F_1$ and MCC provide
complementary summaries of screening performance, with formulas
in Appendix \ref{app:simulation_details}.

In each replication $\Sthree$ sets its Stage 1 run count
$N_{\mathrm{S1}}$ adaptively. Each LassoCV arm receives a two-level design
of the same size, built by coordinate exchange with the same criterion.
Screening quality is computed on the raw selections.
For the paired comparisons we use the matched size
$k_{\mathrm{m}}=\min(k^{\Sthree}_{\mathrm{sel}},k^{\mathrm{La}}_{\mathrm{sel}},15)$,
retaining each arm's highest-scoring factors without refitting the screen.
The fixed cap of 15 limits the size of the fitted second-order surface.
Both arms use the same Stage 2 augmentation and polynomial
selection rules on their $k_{\mathrm{m}}$ factors.
Block indicators are included for $\Sthree$ and omitted for
the one-shot arms.
This matches the Stage 1 run count and the Stage 2 factor count,
while the additional runs can differ because the two Stage 1
model matrices can have different ranks $r_1$.
Appendix \ref{app:fixed_budget} adds a complementary comparison under
a fixed total budget, with each arm allocating its runs between screening
and the surface.

\subsection{Sensitivity Analysis}
\label{sec:sensitivity}

We examine performance over the $(c,q)$ grid before presenting
aggregate comparisons in the recommended operating ranges.
Figure \ref{fig:heatmaps} shows the mean of the
$\Delta\mathrm{LOSS}_{\min}$ gaps over $k=4,\ldots,12$
on the full $7\times9$ $(c,q)$ grid at $p=80$.

\begin{figure}[!htbp]
\centering
\includegraphics[width=\textwidth]{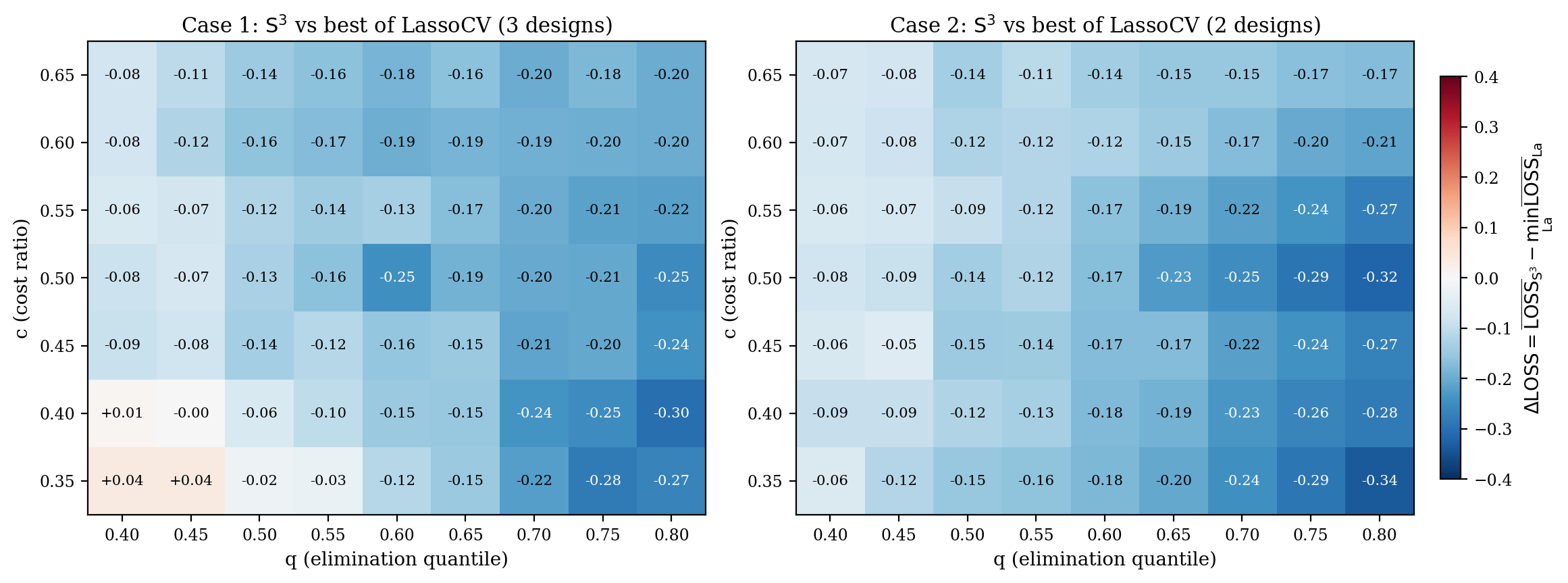}
\caption{Cell-mean gaps
$\Delta\mathrm{LOSS}_{\min}
=\overline{\mathrm{LOSS}}_{\Sthree}
-\min_{\mathrm{La}}\overline{\mathrm{LOSS}}_{\mathrm{La}}$
on the $7\times9$ $(c,q)$ grid at $p=80$.
The gaps are formed separately at each $k$ and then averaged
over $k=4,\ldots,12$.
Left panel: Case 1, with the minimum over l\_welch,
l\_varsplus and l\_es2.
Right panel: Case 2, with the minimum over l\_welch and l\_varsplus.
Negative values favor $\Sthree$ and positive values favor
one-shot LassoCV.}
\label{fig:heatmaps}
\end{figure}

$\Sthree$ has lower mean LOSS than the best LassoCV baseline
in 60 of 63 cells in Case 1, with a mean
$\Delta\mathrm{LOSS}_{\min}$ of $-0.150$, and in all 63 cells
in Case 2, with a mean gap of $-0.164$.
The recommended operating ranges vary with $p$, as discussed
in Section \ref{sec:main_comparison}.

The heatmap averages over a sweep of the true sparsity $k_{\mathrm{true}} \in \{4, \ldots, 12\}$ at $p = 80$ (sparsity level $k/p$ from $5\%$ to $15\%$). Table \ref{tab:ksweep} unpacks that sweep so the averaging is transparent. The qualitative picture is stable across the whole range. $\Sthree$'s selected size stays near $6$ while LassoCV's raw selection climbs from $18$ to $23$ in Case 1 (and from $15$ to $18$ in Case 2), so the second-order model implied by the raw one-shot selection grows further as the truth gets denser. The paired LOSS advantage holds at every $k_{\mathrm{true}}$ and widens for $k_{\mathrm{true}} \ge 6$, and
$\Sthree$'s $F_1$ leads throughout.

\begin{table}[!htbp]
\centering
\caption{Sensitivity of the $\Sthree$-versus-l\_welch comparison to
the true sparsity $k_{\mathrm{true}}$, at $p = 80$ over the recommended
band ($c \in [0.45, 0.65]$, $q \in [0.40, 0.80]$), $R = 500$
replications per cell. The $k_{\mathrm{sel}}$ and $F_1$
columns report $\Sthree\,/\,$LassoCV. $k_{\mathrm{sel}}$ is the raw selected size, $F_1$ uses the raw LassoCV selection, and
$\Delta\mathrm{LOSS}$ is the paired $\Sthree$-minus-l\_welch gap at matched $k_{\mathrm{m}}$.}
\label{tab:ksweep}
\small
\setlength{\tabcolsep}{5pt}
\begin{tabular}{c rcc rcc}
\toprule
 & \multicolumn{3}{c}{Case 1} & \multicolumn{3}{c}{Case 2} \\
\cmidrule(lr){2-4}\cmidrule(lr){5-7}
$k_{\mathrm{true}}$
  & $\Delta$LOSS & $k_{\mathrm{sel}}$ & $F_1$
  & $\Delta$LOSS & $k_{\mathrm{sel}}$ & $F_1$ \\
\midrule
4  & $-0.04$ & $5.8/17.6$ & $0.77/0.37$ & $-0.09$ & $5.9/14.9$ & $0.54/0.34$ \\
5  & $-0.10$ & $6.0/19.7$ & $0.79/0.39$ & $-0.08$ & $5.9/16.0$ & $0.53/0.36$ \\
6  & $-0.16$ & $6.1/21.0$ & $0.75/0.42$ & $-0.16$ & $6.1/16.4$ & $0.49/0.36$ \\
7  & $-0.25$ & $6.3/21.9$ & $0.71/0.43$ & $-0.17$ & $6.3/16.9$ & $0.46/0.36$ \\
8  & $-0.22$ & $6.5/22.8$ & $0.66/0.42$ & $-0.18$ & $6.5/17.6$ & $0.43/0.35$ \\
9  & $-0.22$ & $6.5/22.9$ & $0.62/0.43$ & $-0.24$ & $6.5/17.9$ & $0.41/0.35$ \\
10 & $-0.16$ & $6.6/23.0$ & $0.58/0.43$ & $-0.20$ & $6.5/17.3$ & $0.39/0.35$ \\
11 & $-0.18$ & $6.5/23.2$ & $0.55/0.43$ & $-0.24$ & $6.4/17.6$ & $0.37/0.35$ \\
12 & $-0.15$ & $6.6/23.2$ & $0.52/0.43$ & $-0.23$ & $6.5/17.8$ & $0.36/0.36$ \\
\bottomrule
\end{tabular}
\end{table}

\subsection{Main Comparison and Operating Point Recommendation}
\label{sec:main_comparison}

The $\Var(s_+^{\mathrm{W}})$ criterion makes the cost ratio $c$ the operating point parameter that matters. The elimination quantile $q$ is far less sensitive to $p$ and tolerates the full band $[0.40, 0.80]$ without retuning. The cells of largest $\Sthree$ LOSS advantage shift with $p$, and the mechanism is the first-round size $\lceil c\,p \rceil$, which scales with $p$. At small $p$ a larger $c$ is needed so each round carries enough runs for a stable importance ranking. At large $p$ even a small $c$ already yields a large absolute round, so a smaller $c$ suffices and avoids over-spending Stage 1. The quantile $q$ behaves similarly across its band: too small a $q$ eliminates too slowly to converge within $S_{\max}$ rounds, while too large a $q$ removes survivors faster than the LassoCV model can confirm them and risks cutting active factors, so a mid-band value avoids both.

As a rule of thumb, take $c$ relatively large when $p$ is small, and relatively small when $p$ is large, with $q$ left in the band
$[0.40, 0.80]$. This rule reads off $p$ alone, which is fixed and
known before any run is collected, and needs neither the unknown true sparsity $k_{\mathrm{true}}$ nor any LassoCV output as an input. The band, and the operating region it sits in, are calibrated from $\Sthree$'s own LOSS surface over $(c, q)$, not from the $\Sthree$-versus-LassoCV LOSS gap, so it is a prospective operating choice rather than post hoc tuning. As concrete starting points, an interior $c \in [0.55, 0.65]$ when $p$ is around 40, $c \in [0.45, 0.65]$ when $p$ is near 80, and $c \in [0.35, 0.50]$ in the high-dimensional regimes we simulate ($p \in \{120, 200\}$) all sit in the cells of largest $\Sthree$ LOSS advantage across both signal regimes. These are reliable starting points, not necessary conditions: a larger $c$ at $p = 80$, for instance, still beats LassoCV, just by a smaller LOSS margin. Paired comparisons within the recommended band are reported in Appendix \ref{app:matched}. Table \ref{tab:main_results} reports screening quality against l\_welch inside the recommended $c$-band per $p$, and Figure \ref{fig:loss_forest} reports the paired optimization gap $\Delta\mathrm{LOSS} = \overline{\mathrm{LOSS}}_{\Sthree} - \overline{\mathrm{LOSS}}_{\mathrm{l\_welch}}$ at matched $k_{\mathrm{m}}$.

\begin{table}[H]
\centering
\caption{Screening performance of $\Sthree$ and one-shot
LassoCV (La, l\_welch), averaged over the recommended $c$-bands
and $q\in[0.40,0.80]$ at $k_{\mathrm{true}}=5$.
Each parameter cell uses $R=500$ replications.
All metrics use raw selections. $F_1$ and MCC are calculated
from each cell's mean confusion counts before averaging
across cells.}
\label{tab:main_results}
\small
\begin{tabular}{c c c c c c}
\toprule
 & & \multicolumn{4}{c}{$\Sthree$/La} \\
\cmidrule(lr){3-6}
Case & $p$ & Type I error & Type II error & $F_1$ & MCC \\
\midrule
1 & 40  & $0.0428 / 0.3029$ & $0.207 / 0.047$
  & $0.76 / 0.47$ & $0.72 / 0.44$ \\
1 & 80  & $0.0219 / 0.1975$ & $0.137 / 0.026$
  & $0.79 / 0.39$ & $0.78 / 0.44$ \\
1 & 120 & $0.0173 / 0.1517$ & $0.134 / 0.027$
  & $0.77 / 0.36$ & $0.76 / 0.42$ \\
1 & 200 & $0.0095 / 0.1030$ & $0.085 / 0.010$
  & $0.81 / 0.33$ & $0.81 / 0.42$ \\
\midrule
2 & 40  & $0.0715 / 0.2386$ & $0.504 / 0.309$
  & $0.50 / 0.41$ & $0.43 / 0.33$ \\
2 & 80  & $0.0406 / 0.1639$ & $0.425 / 0.249$
  & $0.53 / 0.36$ & $0.50 / 0.36$ \\
2 & 120 & $0.0306 / 0.1296$ & $0.429 / 0.256$
  & $0.50 / 0.31$ & $0.48 / 0.34$ \\
2 & 200 & $0.0195 / 0.0910$ & $0.367 / 0.209$
  & $0.53 / 0.30$ & $0.53 / 0.35$ \\
\bottomrule
\end{tabular}
\end{table}

\begin{figure}[!htbp]
\centering
\includegraphics[width=0.75\textwidth]{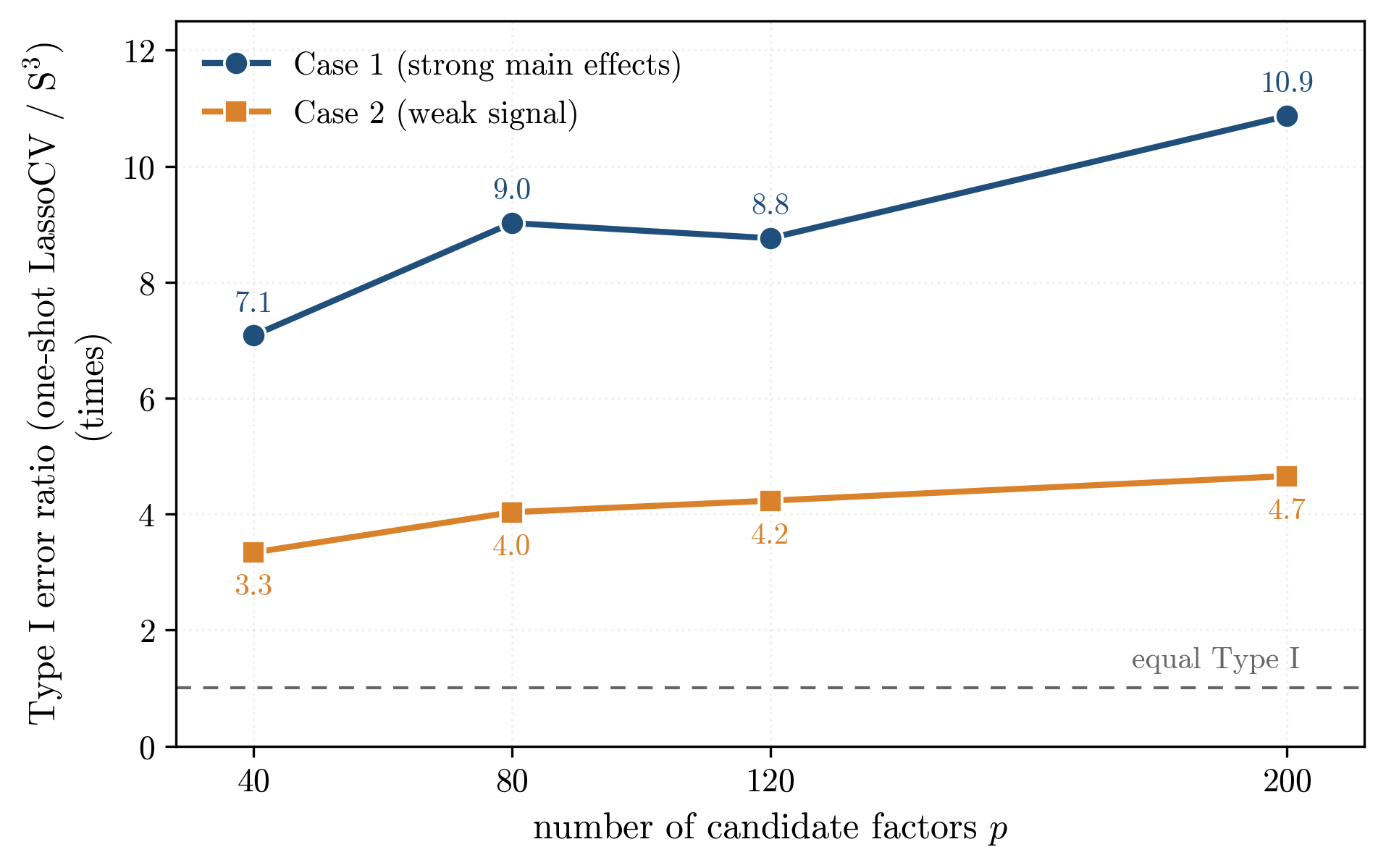}
\caption{Ratio of mean Type I error rates
(one-shot LassoCV$/\Sthree$) over the recommended $c$-band
and $q\in[0.40,0.80]$ at each $p$, with $k_{\mathrm{true}}=5$,
for both cases.}
\label{fig:typeI_ratio}
\end{figure}

The Type I ratio generally grows with $p$ (Figure \ref{fig:typeI_ratio})
because $\Sthree$'s Type I shrinks faster than that of LassoCV as the inactive pool grows, so the raw LassoCV shortlist produced by the default tuning becomes more costly at larger $p$.

Two points follow. First, $\Sthree$ has a strictly higher $F_1$ in all eight settings, $1.2$ to $2.4$ times that of one-shot LassoCV, and a higher MCC, $1.3$ to $1.9$ times. With an active fraction of $5/p$, the Type I gap of $3.3$ to $10.9$ times is what drives these gains. The lower Type I error comes with higher Type II error in all eight settings.

\begin{figure}[!htbp]
\centering
\includegraphics[width=0.85\textwidth]{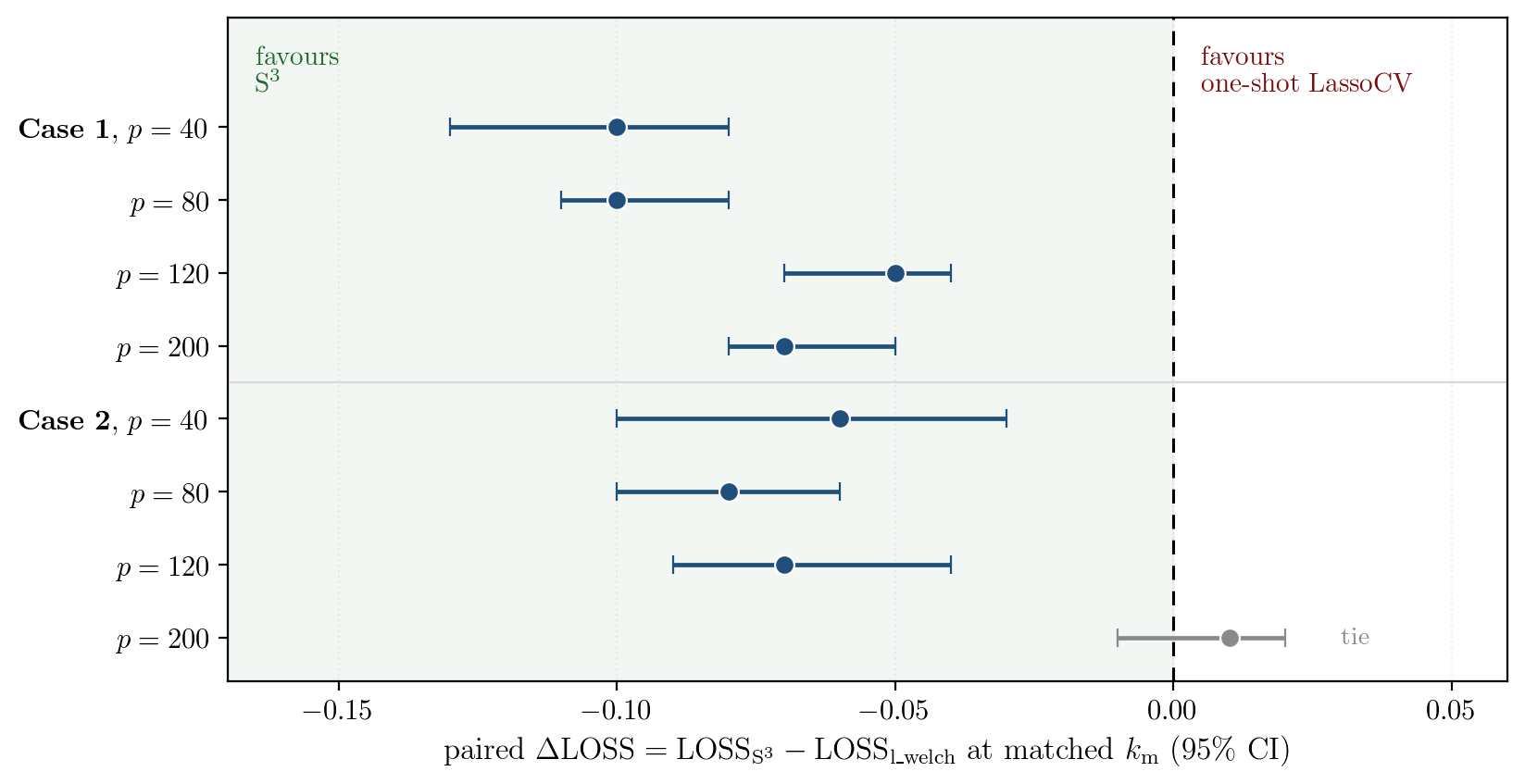}
\caption{Paired $\Delta\mathrm{LOSS}$ between $\Sthree$ and the one-shot
LassoCV baseline (l\_welch) at matched $k_{\mathrm{m}}$, with 95\%
confidence intervals per $(\text{case}, p)$ cell. An interval left of zero favors $\Sthree$.}
\label{fig:loss_forest}
\end{figure}

Second, optimization quality improves under the same comparison. $\Sthree$ holds a significant LOSS advantage in seven of the eight $(\text{case}, p)$ rows (from $-0.10$ to $-0.05$), with the eighth (Case 2, $p = 200$) a statistical tie.

\subsection{Computational Cost}
\label{sec:timing}

Table \ref{tab:timing} reports Stage 1 wall times by $p$ for Case 1 at one $(c, q)$ choice per $p$ inside the recommended band, with $R = 500$ replications, each on one core of an AMD Zen 4 node. The ratio of median wall times $t(\Sthree)/t(\mathrm{LassoCV})$ ranges from $0.85$ to $0.95$, so $\Sthree$ is faster at every tested size.

\begin{table}[!htbp]
\centering
\caption{Stage 1 wall time of $\Sthree$ versus the one-shot LassoCV
baseline at the matched Stage 1 budget $n = N_{\mathrm{S1}}$, across
$p \in \{40, 80, 120, 200\}$ at $k_{\mathrm{true}} = 5$,
Case 1, $R = 500$. Both timings cover design construction and
model fitting. Times are medians over replications, the ratio is
the ratio of medians ($t_{\Sthree}/t_{\mathrm{La}}$), and $N_{\mathrm{S1}}$, $n_{\mathrm{rounds}}$
(screening rounds), and $k_{\mathrm{sel}}$ ($\Sthree$/La selected
size) are means.}
\label{tab:timing}
\small
\setlength{\tabcolsep}{4.0pt}
\begin{tabular}{c c c r r c r r r}
\toprule
$p$ & $c$ & $q$
  & $t_{\Sthree}$ med.\
  & $t_{\mathrm{La}}$ med.\
  & ratio 
  & $N_{\mathrm{S1}}$
  & $n_{\mathrm{rounds}}$
  & $k^{\Sthree}_{\mathrm{sel}}$ / $k^{\mathrm{La}}_{\mathrm{sel}}$ \\
\midrule
 40 & 0.60 & 0.65 &   0.59\,s &   0.63\,s & $0.93$ &  34.8 & 4.00 & 5.1 / 15.1 \\
 80 & 0.55 & 0.65 &   6.83\,s &   7.17\,s & $0.95$ &  58.1 & 4.22 & 5.7 / 19.1 \\
120 & 0.45 & 0.65 &  32.36\,s &  35.20\,s & $0.92$ &  66.9 & 4.29 & 5.9 / 22.0 \\
200 & 0.45 & 0.65 & 195.93\,s & 229.17\,s & $0.85$ & 106.2 & 4.55 & 6.0 / 25.2 \\
\bottomrule
\end{tabular}
\end{table}

\FloatBarrier
\section{Benchmark Application: Borehole Function}
\label{sec:borehole}

The Borehole function \citep{harper1983sensitivity} models the steady-state groundwater flow rate $Q$ through a borehole as a function of 8 physical parameters: borehole radius ($r_w$), radius of influence ($r$), upper aquifer transmissivity ($T_u$), upper aquifer head ($H_u$), lower aquifer transmissivity ($T_l$), lower aquifer head ($H_l$), borehole length ($L$), and hydraulic conductivity ($K_w$). The upper aquifer head $H_u$ has a positive effect on the
flow rate $Q$, whereas the lower aquifer head $H_l$ has a
negative effect. Following \citet{moon2012two}, we embed the 8 active factors among $p = 80$ candidates (72 inactive factors), creating a supersaturated screening problem with known ground truth. Each physical parameter is mapped to $\{-1, +1\}$ at its documented low/high physical range. Responses are generated by evaluating the Borehole function at the requested settings of the eight physical factors and adding the same Gaussian noise and batch effects as in the simulation study, following the observation model \eqref{eq:obs}. The 72 embedded inactive factors do not enter the function. The oracle optimum is $f^\star = 309.58$, computed by differential evolution over $[-1, 1]^8$.

A benchmark function is the right testbed for a sequential method. The design at each round depends on the previous round's analysis, so the runs must be generated live, which a static published design-and-response table cannot provide. The Borehole function can be evaluated at the factor settings
requested by the procedure.

We apply $\Sthree$ with the same pipeline as the simulation study
(4-arm pairwise: $\Sthree$ vs l\_welch, l\_varsplus, l\_es2) over the full $7 \times 9 = 63$ $(c, q)$ grid at $R = 500$ replications per cell.

\begin{table}[t]
\centering
\caption{Borehole benchmark ($p = 80$, $k_{\mathrm{true}} = 8$,
$R = 500$): $\Sthree$ versus one-shot LassoCV paired-replication comparison over all $63$ $(c, q)$ cells. Paired $\Delta\mathrm{LOSS}$ 95\% CIs computed from replicate-level LOSS differences across all replications that reach Stage 2.
$k_{\mathrm{raw}}$ is the mean raw selected size ($\Sthree$/La). The
$F_1$ and MCC ratios are $\Sthree$/La, and the Type I (T1) ratio is
La/$\Sthree$.}
\label{tab:borehole}
\small
\setlength{\tabcolsep}{5pt}
\begin{tabular}{l c c c c c c}
\toprule
$\Sthree$ vs & paired $\Delta$LOSS (95\% CI) & cells won
  & $k_{\mathrm{raw}}$ ($\Sthree$/La) & $F_1$ ratio & MCC ratio & T1 ratio \\
\midrule
l\_welch    & $-5.08~[-5.40, -4.75]$ & $59/63$ & $5.4 / 19.7$ & $1.62$ & $1.90$ & $11.2$ \\
l\_varsplus & $-2.62~[-2.92, -2.31]$ & $57/63$ & $5.4 / 17.8$ & $1.53$ & $1.76$ & $9.8$  \\
l\_es2      & $-2.84~[-3.15, -2.53]$ & $57/63$ & $5.4 / 17.7$ & $1.53$ & $1.75$ & $9.7$  \\
\bottomrule
\end{tabular}
\end{table}

From Table \ref{tab:borehole}, we found that $\Sthree$ wins on $59/63$ cells against the l\_welch baseline, with a paired-replication mean LOSS advantage of $-5.08$ (95\% CI $[-5.40, -4.75]$). $F_1$ and MCC dominance is complete on $63/63$ cells against every LassoCV baseline. Aggregated over the $(c, q)$ grid against the best-performing of the three LassoCV baselines at each cell, $\Sthree$ wins $57/63$ cells with mean $\Delta\mathrm{LOSS} = -2.25$ (Appendix \ref{app:borehole}). The few exceptions sit in the low-elimination corner ($c \le 0.45$, $q \le 0.45$), outside the recommended operating region and consistent with the less stable behavior seen there in the simulation study.

\paragraph{Factor-level interpretation.} Recording the replicate-level selected indices resolves the comparison to individual physical factors. 
$\Sthree$ selects five physical factors at the following frequencies: $r_w$ 100\%, $H_u$ 93\%, $L$ 91\%, $H_l$ 90\%, and $K_w$ 43\%. The remaining three physical factors ($r$, $T_u$, $T_l$) have weak first order effects on $Q$. At this sample size, they are selected at approximately the same rate as the inactive factors ($1.5\%$). LassoCV retains the same three weak factors at $15$--$23\%$, indistinguishable from its own inactive background rate ($18$--$21\%$).

\FloatBarrier
\section{Discussion and Concluding Remarks}
\label{sec:disc}

In the supersaturated regime, $\Sthree$ screens to a far smaller selected set than the raw one-shot LassoCV output, so the Stage 2 response surface model stays small, and the selected set carries far fewer false positives. This advantage holds in every $(\text{case},p)$ aggregate cell and generally grows with $p$. Optimization quality is usually improved under the matched-$k_{\mathrm{m}}$ comparison.

The theoretical results are conditional by design.
Proposition \ref{prop:inactive_commitment} bounds the probability that a noise factor is ever fixed under Assumption \ref{as:noise}, and Appendix \ref{app:noise_diagnostics} reports round-level diagnostics for the constants that assumption involves. A quantitative link between the GQ schedule and standard high-dimensional regularity conditions, the irrepresentable \citep{zhao2006model} and restricted eigenvalue \citep{bickel2009simultaneous} conditions, remains open.

Stability selection \citep{meinshausen2010stability} at its default
stability threshold $\pi_{\mathrm{thr}} = 0.6$ under-selects (Type II $\approx 0.21$ and $0.61$ in Cases 1 and 2), leaving the Stage 2 response surface model with too few factors to recover the optimum, and the matched-budget comparison in Appendix \ref{app:ss} favors $\Sthree$ significantly in both
cases. The knockoff filter \citep{barber2015controlling} and its
Model-X extension \citep{candes2018panning} do not directly apply to binary two-level designs, where constructing valid knockoffs with fixed margins remains an open problem. Computational cost limits the use of the HILS two-stage construction of
\citet{stallrich2025optimal}, whose known-sign positive-correlation ideal motivates our design criterion. Its full sign-recovery procedure already costs roughly $80$ minutes per design at $n = 14$, $p = 20$. The smallest candidate count we simulate, $p = 40$, is twice that, so we do not include HILS in our comparisons. Controlled sequential bifurcation (CSB) and CSB-X are also related \citep{wan2006controlled,wan2010improving}, but they solve a different protocol. A fair comparison between CSB-X and a two-stage response surface modeling protocol would require its own budget-aware adaptation, which we leave to future work. Finally, two-level designs here give the cleanest discrete form of the theory and the natural screening corners, while continuous designs are left to follow-up work.


\if1\anon
\section{Acknowledgments}
\label{sec:ack}

The first author expresses sincere thanks for funding through the King's China Scholarship Council Scholarship Programme (File No. 202307720013). Kalliopi Mylona and Matteo Borrotti acknowledge support from COST Action CA21163 -- Text, functional and other high-dimensional data in econometrics: New models, methods, applications (HiTEc), supported by COST (European Cooperation in Science and Technology). The authors acknowledge the use of King's CREATE HPC. Retrieved July 14, 2026, from \url{https://doi.org/10.18742/rnvf-m076}.
\fi

\section{Data Availability Statement}

The computer code for generating and analyzing the simulated data is publicly available at \url{https://github.com/hansq358/S3}.

\bibliography{references}

@article{tibshirani1996regression,
  title={Regression shrinkage and selection via the lasso},
  author={Tibshirani, Robert},
  journal={Journal of the Royal Statistical Society Series B: Statistical Methodology},
  volume={58},
  number={1},
  pages={267--288},
  year={1996},
  publisher={Oxford University Press}
}

@article{booth1962some,
  title={Some systematic supersaturated designs},
  author={Booth, Kathleen HV and Cox, David R},
  journal={Technometrics},
  volume={4},
  number={4},
  pages={489--495},
  year={1962},
  publisher={Taylor \& Francis}
}

@article{fan2008sure,
  title={Sure independence screening for ultrahigh dimensional feature space},
  author={Fan, Jianqing and Lv, Jinchi},
  journal={Journal of the Royal Statistical Society Series B: Statistical Methodology},
  volume={70},
  number={5},
  pages={849--911},
  year={2008},
  publisher={Oxford University Press}
}

@inproceedings{jamieson2016non,
  title={Non-stochastic best arm identification and hyperparameter optimization},
  author={Jamieson, Kevin and Talwalkar, Ameet},
  booktitle={Artificial intelligence and statistics},
  pages={240--248},
  year={2016},
  organization={PMLR}
}

@article{szekely2007measuring,
  title={Measuring and testing dependence by correlation of distances},
  author={Sz{\'e}kely, G{\'a}bor J and Rizzo, Maria L and Bakirov, Nail K},
  journal={The Annals of Statistics},
  year={2007}
}

@article{li2012feature,
  title={Feature screening via distance correlation learning},
  author={Li, Runze and Zhong, Wei and Zhu, Liping},
  journal={Journal of the American Statistical Association},
  volume={107},
  number={499},
  pages={1129--1139},
  year={2012},
  publisher={Taylor \& Francis}
}

@techreport{harper1983sensitivity,
  title={Sensitivity/uncertainty analysis of a borehole scenario comparing Latin hypercube sampling and deterministic sensitivity approaches},
  author={Harper, William V and Gupta, Sumant K},
  year={1983},
  institution={Battelle Memorial Inst., Columbus, OH (USA). Office of Nuclear Waste Isolation}
}

@article{moon2012two,
  title={Two-stage sensitivity-based group screening in computer experiments},
  author={Moon, Hyejung and Dean, Angela M and Santner, Thomas J},
  journal={Technometrics},
  volume={54},
  number={4},
  pages={376--387},
  year={2012},
  publisher={Taylor \& Francis}
}

@article{singh2024factor,
  title={Factor selection in screening experiments by aggregation over random models},
  author={Singh, Rakhi and Stufken, John},
  journal={Computational Statistics \& Data Analysis},
  volume={194},
  pages={107940},
  year={2024},
  publisher={Elsevier}
}

@article{weese2021strategies,
  title={Strategies for supersaturated screening: Group orthogonal and constrained var (s) designs},
  author={Weese, Maria L and Stallrich, Jonathan W and Smucker, Byran J and Edwards, David J},
  journal={Technometrics},
  volume={63},
  number={4},
  pages={443--455},
  year={2021},
  publisher={Taylor \& Francis}
}

@article{singh2023selection,
  title={Selection of two-level supersaturated designs for main effects models},
  author={Singh, Rakhi and Stufken, John},
  journal={Technometrics},
  volume={65},
  number={1},
  pages={96--104},
  year={2023},
  publisher={Taylor \& Francis}
}

@article{stallrich2025optimal,
  title={An optimal design framework for lasso sign recovery},
  author={Stallrich, Jonathan W and Young, Kade and Weese, Maria L and Smucker, Byran J and Edwards, David J},
  journal={Journal of the Royal Statistical Society Series B: Statistical Methodology},
  volume={87},
  number={5},
  pages={1457--1480},
  year={2025},
  publisher={Oxford University Press UK}
}

@article{zou2005regularization,
  title={Regularization and variable selection via the elastic net},
  author={Zou, Hui and Hastie, Trevor},
  journal={Journal of the Royal Statistical Society Series B: Statistical Methodology},
  volume={67},
  number={2},
  pages={301--320},
  year={2005},
  publisher={Oxford University Press}
}

@article{zou2006adaptive,
  title={The adaptive lasso and its oracle properties},
  author={Zou, Hui},
  journal={Journal of the American statistical association},
  volume={101},
  number={476},
  pages={1418--1429},
  year={2006},
  publisher={Taylor \& Francis}
}

@article{fan2001variable,
  title={Variable selection via nonconcave penalized likelihood and its oracle properties},
  author={Fan, Jianqing and Li, Runze},
  journal={Journal of the American statistical Association},
  volume={96},
  number={456},
  pages={1348--1360},
  year={2001},
  publisher={Taylor \& Francis}
}

@article{meyer1995coordinate,
  title={The coordinate-exchange algorithm for constructing exact optimal experimental designs},
  author={Meyer, Ruth K and Nachtsheim, Christopher J},
  journal={Technometrics},
  volume={37},
  number={1},
  pages={60--69},
  year={1995},
  publisher={Taylor \& Francis}
}

@article{welch1974lower,
  title={Lower bounds on the maximum cross correlation of signals (corresp.)},
  author={Welch, Lloyd},
  journal={IEEE Transactions on Information theory},
  volume={20},
  number={3},
  pages={397--399},
  year={1974},
  publisher={IEEE}
}

@article{pan2020ball,
  title={Ball covariance: A generic measure of dependence in Banach space},
  author={Pan, Wenliang and Wang, Xueqin and Zhang, Heping and Zhu, Hongtu and Zhu, Jin},
  journal={Journal of the American Statistical Association},
  year={2020},
  publisher={Taylor \& Francis}
}

@article{lundberg2017unified,
  title={A unified approach to interpreting model predictions},
  author={Lundberg, Scott M and Lee, Su-In},
  journal={Advances in neural information processing systems},
  volume={30},
  year={2017}
}

@article{breiman2001random,
  title={Random forests},
  author={Breiman, Leo},
  journal={Machine learning},
  volume={45},
  number={1},
  pages={5--32},
  year={2001},
  publisher={Springer}
}

@article{phoa2014stepwise,
  title={The stepwise response refinement screener (SRRS)},
  author={Phoa, Frederick Kin Hing},
  journal={Statistica Sinica},
  volume={24},
  number={1},
  pages={197--210},
  year={2014},
  publisher={JSTOR}
}

@article{gutman2014augmenting,
  title={Augmenting supersaturated designs with Bayesian D-optimality},
  author={Gutman, Alex J and White, Edward D and Lin, Dennis KJ and Hill, Raymond R},
  journal={Computational Statistics \& Data Analysis},
  volume={71},
  pages={1147--1158},
  year={2014},
  publisher={Elsevier}
}

@article{candes2018panning,
  title={Panning for gold:‘model-X’knockoffs for high dimensional controlled variable selection},
  author={Candes, Emmanuel and Fan, Yingying and Janson, Lucas and Lv, Jinchi},
  journal={Journal of the Royal Statistical Society Series B: Statistical Methodology},
  volume={80},
  number={3},
  pages={551--577},
  year={2018},
  publisher={Oxford University Press}
}

@article{zhao2006model,
  title={On model selection consistency of Lasso},
  author={Zhao, Peng and Yu, Bin},
  journal={The Journal of Machine Learning Research},
  volume={7},
  pages={2541--2563},
  year={2006},
  publisher={JMLR. org}
}

@article{chen2008extended,
  title={Extended Bayesian information criteria for model selection with large model spaces},
  author={Chen, Jiahua and Chen, Zehua},
  journal={Biometrika},
  volume={95},
  number={3},
  pages={759--771},
  year={2008},
  publisher={Oxford University Press}
}

@article{bickel2009simultaneous,
  title={Simultaneous analysis of Lasso and Dantzig selector},
  author={Bickel, Peter J and Ritov, Ya’acov and Tsybakov, Alexandre B},
  journal={The Annals of Statistics},
  year={2009}
}

@article{meinshausen2010stability,
  title={Stability selection},
  author={Meinshausen, Nicolai and B{\"u}hlmann, Peter},
  journal={Journal of the Royal Statistical Society Series B: Statistical Methodology},
  volume={72},
  number={4},
  pages={417--473},
  year={2010},
  publisher={Oxford University Press}
}

@article{barber2015controlling,
  title={Controlling the false discovery rate via knockoffs},
  author={Barber, Rina Foygel and Cand{\`e}s, Emmanuel J},
  journal={The Annals of Statistics},
  year={2015}
}

@article{wan2006controlled,
  title={Controlled sequential bifurcation: A new factor-screening method for discrete-event simulation},
  author={Wan, Hong and Ankenman, Bruce E and Nelson, Barry L},
  journal={Operations Research},
  volume={54},
  number={4},
  pages={743--755},
  year={2006},
  publisher={INFORMS}
}

@article{wan2010improving,
  title={Improving the efficiency and efficacy of controlled sequential bifurcation for simulation factor screening},
  author={Wan, Hong and Ankenman, Bruce E and Nelson, Barry L},
  journal={INFORMS Journal on Computing},
  volume={22},
  number={3},
  pages={482--492},
  year={2010},
  publisher={INFORMS}
}

@article{byrd1995limited,
  title={A limited memory algorithm for bound constrained optimization},
  author={Byrd, Richard H. and Lu, Peihuang and Nocedal, Jorge and Zhu, Ciyou},
  journal={SIAM Journal on Scientific Computing},
  volume={16},
  number={5},
  pages={1190--1208},
  year={1995}
}

\phantomsection
\bigskip
\begin{center}
{\large\bf SUPPLEMENTARY MATERIAL}
\end{center}

\noindent
Supplementary material accompanying this article contains the technical proofs, supporting diagnostics, simulation details, baseline comparisons, comparisons of alternative design criteria and screening rules, benchmark summaries, and reproducibility files. The reproducibility bundle includes the Python code and scripts for generating the reported tables and figures.

\appendix

\section{Proofs and Supporting Results}
\label{app:proofs}

This appendix collects the technical support for the design and theory sections: the Welch-scale calibration result, the proof of Proposition \ref{prop:inactive_commitment} and its FDP corollaries, the Stage 1 run bound supporting Proposition \ref{prop:budget}, and round-level diagnostics for Assumption \ref{as:noise}, the positive-cone penalty, and late noise commitment.

\subsection{Welch-Scale Calibration of
\texorpdfstring{$\Var(s_+^{\mathrm{W}})$}{Var(s+W)}}
\label{app:welch_calibration}

\begin{proposition}[Welch-scale calibration of
\texorpdfstring{$\Var(s_+^{\mathrm{W}})$}{Var(s+W)}]
\label{prop:welch}
Let $X \in \{-1, +1\}^{n \times p}$ be a two-level design with
off-diagonal column inner products $s_{ij}$, and assume $p > n$.

\medskip\noindent
\textup{(a)} \textbf{Welch-scale normalization.} On the positive constant locus,
if $s_{ij} = \sqrt{n(p - n)/(p - 1)}$ for every $i \ne j$, then the
penalty term in \eqref{eq:welch_crit} satisfies
\begin{equation}
  \eta_{\mathrm{W}}(n, p) \, (\bar s)^2 \;=\; 1.
  \label{eq:welch_calibration}
\end{equation}

\medskip\noindent
\textup{(b)} \textbf{Welch-scale lower bound and its non-attainment.}
Restricted to the constant locus $\{s_{ij} = s_0 : i \ne j\}$ with
$s_0 \ge 0$, the criterion in \eqref{eq:welch_crit} reduces to
$\eta_{\mathrm{W}}\, s_0^2$. The Welch lower bound \citep{welch1974lower}
gives $\max_{i \ne j} s_{ij}^2 \ge n(p-n)/(p-1)$, so on this locus the
criterion is at least $1$, with equality only if every off-diagonal
equals $s^\star = \sqrt{n(p-n)/(p-1)}$. This configuration is not
attainable by any two-level design when $p > n$ and $n \ge 2$. If all off-diagonals of
$G = X^\top X$ equalled a common $s_0$ with $0 < s_0 < n$, then
$G = (n - s_0) I_p + s_0 \mathbf{1}\mathbf{1}^\top$ would have full
rank $p$, whereas $\rank(X^\top X) \le n < p$. The Welch
magnitude therefore sets a scale for the unavoidable aliasing rather
than an attainable target, and $\Var(s_+^{\mathrm{W}})$ is a
Welch-scaled positive-cone criterion.

\medskip\noindent
\textup{(c)} \textbf{Orthogonal limit.} When $p \le n$,
$\eta_{\mathrm{W}}(n, p) = 1$ and, on the positive cone, the criterion
reduces to the classical $E(s^2) = \overline{s^2}$ near-orthogonality
objective \citep{booth1962some}, whose minimum value $0$ at
$s_{ij} = 0$ for all $i \ne j$ is achieved when an orthogonal two-level
design exists.
\end{proposition}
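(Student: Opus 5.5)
The plan is to verify the three parts in order. Parts (a) and (c) are direct substitutions into \eqref{eq:eta} and \eqref{eq:welch_crit}. The only substantive step is the rank argument in (b).

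For (a), on the constant locus $s_{ij}=s^\star$ for all $i<j$, the average is $\bar s=s^\star$. Hence $(\bar s)^2=n(p-n)/(p-1)$. Multiplying by $\eta_{\mathrm{W}}(n,p)=(p-1)/\{n(p-n)\}$, which is the $p>n$ branch of \eqref{eq:eta}, gives exactly $1$. This is \eqref{eq:welch_calibration}.

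For (b), I will first reduce the criterion on the locus $\{s_{ij}=s_0\ge 0\}$.
\begin{itemize}
\item Since $\overline{s^2}=s_0^2=(\bar s)^2$, the variance term vanishes.
\item Since $s_0\ge 0$, every $\min(s_{ij},0)^2=0$, so the positive-cone penalty vanishes.
\item The criterion is therefore $\eta_{\mathrm{W}}s_0^2$.
\end{itemize}
Next I apply the Welch bound \eqref{eq:welch_bound_intro} to the design. On this locus $\max_{i\ne j}s_{ij}^2=s_0^2$, so $s_0^2\ge n(p-n)/(p-1)$. Multiplying by $\eta_{\mathrm{W}}$ gives a criterion value of at least $1$, with equality if and only if $s_0=s^\star$.

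For non-attainment, I use the Gram matrix. The diagonal entries are $\mathbf{x}_j^\top\mathbf{x}_j=n$, because the entries are $\pm1$. So the locus condition gives $G=X^\top X=(n-s_0)I_p+s_0\mathbf{1}\mathbf{1}^\top$. Its eigenvalues are $n-s_0$ with multiplicity $p-1$, and $n-s_0+ps_0$ with multiplicity $1$. Both are positive whenever $0<s_0<n$, so $\rank G=p$. But $\rank G=\rank X\le n<p$, which is a contradiction.

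It remains to confirm that $s^\star$ lies in $(0,n)$, so the contradiction applies at $s_0=s^\star$.
\begin{itemize}
\item Positivity holds because $p>n$.
\item The bound $s^{\star2}<n^2$ is equivalent to $p-n<n(p-1)$, i.e. $(n-1)p+n>0$, which holds for $n\ge1$.
\end{itemize}
So equality is never attained, and the hypothesis $n\ge2$ is not even needed here. I expect this range check, together with the observation that equality in Welch's bound is irrelevant once the rank obstruction applies, to be the only point needing care.

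For (c), when $p\le n$ we have $\eta_{\mathrm{W}}=1$. The first two terms of \eqref{eq:welch_crit} then sum to $\overline{s^2}-(\bar s)^2+(\bar s)^2=\overline{s^2}$. On the positive cone the third term is zero, so the criterion equals $E(s^2)$. This is nonnegative, and it equals $0$ exactly when every $s_{ij}=0$, i.e. when the design is orthogonal. That value is therefore achieved whenever an orthogonal two-level design with $p$ columns and $n$ runs exists.
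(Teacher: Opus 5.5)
Your proof follows the paper's route step for step: the substitution in (a), the reduction to $\eta_{\mathrm{W}}s_0^2$ plus Welch's bound in (b), the Gram-matrix rank contradiction, and the direct reduction to $E(s^2)$ in (c). There is one genuine error, in the range check you yourself flagged as delicate, and it leads you to a false claim.

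\textbf{The algebra slip.} You write that $p-n<n(p-1)$ is equivalent to $(n-1)p+n>0$. In fact
$n(p-1)-(p-n)=np-n-p+n=(n-1)p$. Hence $s^{\star2}<n^2$ holds if and only if $(n-1)p>0$, that is, if and only if $n\ge 2$. Your remark that the hypothesis $n\ge2$ ``is not even needed'' is therefore wrong.

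\textbf{Why $n=1$ genuinely fails.} At $n=1$ you get $s^\star=\sqrt{(p-1)/(p-1)}=1=n$. So $s^\star$ sits at the endpoint of $(0,n)$, where the eigenvalue $n-s_0$ of $G$ vanishes and the rank contradiction disappears. Concretely, take the single-run design with every entry equal to $+1$. It gives $s_{ij}=1=s^\star$ for all $i\ne j$, and $G=\mathbf{1}\mathbf{1}^\top$ has rank $1=n$. The criterion value is $\eta_{\mathrm{W}}(1,p)\cdot 1=1$. So the Welch-saturating positive constant locus is attained, and the non-attainment claim is false without $n\ge2$.

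\textbf{The fix.} Correct the computation to $(n-1)p>0$ and drop the claimed strengthening. Your argument then proves the statement as written and coincides with the paper's proof. The paper uses the hypothesis at exactly this point, noting that $s^\star\in(0,n)$ whenever $n\ge 2$.
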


\begin{proof}[Proof of Proposition \ref{prop:welch}]
\textup{(a)} Under the stated condition,
$|s_{ij}| = \sqrt{n(p - n)/(p - 1)}$ for every $i \ne j$. With the
positive-cone constraint $s_{ij} \ge 0$, this gives
$s_{ij} = +\sqrt{n(p - n)/(p - 1)}$ uniformly. Hence
\[
  \bar s
  \;=\;
  \frac{1}{\binom{p}{2}} \sum_{i < j} s_{ij}
  \;=\;
  \sqrt{\frac{n(p - n)}{p - 1}},
\]
and therefore
$(\bar s)^2 = n(p - n)/(p - 1)$. Multiplying by
$\eta_{\mathrm{W}}(n, p) = (p - 1)/[n(p - n)]$ gives
\[
  \eta_{\mathrm{W}} \, (\bar s)^2
  \;=\;
  \frac{p - 1}{n(p - n)} \cdot \frac{n(p - n)}{p - 1}
  \;=\; 1,
\]
which is \eqref{eq:welch_calibration}.

\textup{(b)} On the constant locus $\{s_{ij} = s_0 : i \ne j\}$ with
$s_0 \ge 0$, the variance term of \eqref{eq:welch_crit} vanishes and the
positive-cone penalty is zero, so the criterion equals
$\eta_{\mathrm{W}}\, s_0^2$. The Welch lower bound \citep{welch1974lower}
gives $\max_{i \ne j} s_{ij}^2 \ge n(p - n)/(p - 1)$ for $p > n$, hence
$s_0 \ge s^\star = \sqrt{n(p - n)/(p - 1)}$ and
$\eta_{\mathrm{W}}\, s_0^2 \ge \eta_{\mathrm{W}}\,(s^\star)^2 = 1$ by
part (a). For the non-attainment claim, suppose every off-diagonal of
$G = X^\top X$ equals a common $s_0$ with $0 < s_0 < n$. Then
$G = (n - s_0) I_p + s_0 \mathbf{1}\mathbf{1}^\top$, whose eigenvalues
are $n + (p - 1)s_0 > 0$ once and $n - s_0 > 0$ with multiplicity
$p - 1$, so $\rank(G) = p$, while
$\rank(X^\top X) = \rank(X) \le n < p$, a
contradiction. Since $s^\star \in (0, n)$ whenever $n \ge 2$, the
Welch-saturating positive constant locus is empty for any real design
with $p > n$. Coordinate exchange therefore drives the design toward the
positive cone and the Welch scale only as far as the discrete geometry
allows, and Table \ref{tab:negpair_diag} reports that the cone is
reached in $89$--$98\%$ of recovered designs.

\textup{(c)} For $p \le n$, $\eta_{\mathrm{W}}(n, p) = 1$ and the
criterion reduces to $\overline{s^2} - (\bar s)^2 + (\bar s)^2 = \overline{s^2}$.
When an orthogonal two-level design exists it satisfies $s_{ij} = 0$ for
all $i \ne j$, which zeros each of the three terms and achieves the
minimum value $0$, with the positive-cone constraint holding trivially
at equality.
\end{proof}

\subsection{Proof of Proposition \ref{prop:inactive_commitment} and Its Corollaries}

\begin{proof}[Proof of Proposition \ref{prop:inactive_commitment}]
For $j \in \cN$ to enter the final fixed set $\cF$, there is a
first round $T_j \in \{1, \ldots, S_{\max}\}$ at which $j$ is
committed by the elbow rule, and the events $C_t(j)$ of
Section \ref{sec:theory} are empty for rounds that are not executed. If $T_j = t$, then round $t$ was executed, $j$ was still open at
round $t$, and $j$ was committed there. Remaining open through rounds
$1, \ldots, t-1$ requires exceeding each earlier elimination
threshold, because an open factor whose score falls at or below
$\tau_s$ and which is not committed at round $s$ is eliminated at
round $s$. Hence, with $H_t(j)$ and $C_t(j)$ defined in
Section \ref{sec:theory},
\[
  \{T_j = t\}
  \;\subseteq\;
  \biggl(\bigcap_{s<t} H_s(j)\biggr)
  \cap \{j \in \cO_t\} \cap C_t(j).
\]
For each fixed $t$, define
\[
  E_m^{(t)}
  \;=\;
  \bigcap_{s=1}^{m}
    H_s(j),
  \qquad m = 0, 1, \ldots, t-1,
\]
with $E_0^{(t)} = \Omega$. The event
$E_{m-1}^{(t)} \cap \{j \in \cO_m\}$ is contained in
$\{j \in \cO_m\}$ and is $\cG_{m-1}$-measurable, so the survival part
of Assumption \ref{as:noise} gives
\[
  \bbP\bigl(E_m^{(t)}\bigr)
  \;=\; \bbP\bigl(H_m(j) \,\bigm|\,
    E_{m-1}^{(t)} \cap \{j \in \cO_m\}\bigr)
  \cdot \bbP\bigl(E_{m-1}^{(t)} \cap \{j \in \cO_m\}\bigr)
  \;\le\; \rho_m\,\bbP\bigl(E_{m-1}^{(t)}\bigr).
\]
When the conditioning event is null the same bound holds directly, since $E_m^{(t)} \subseteq E_{m-1}^{(t)} \cap \{j \in \cO_m\}$.
Induction on $m$ with $\bbP(E_0^{(t)}) = 1$ yields
$\bbP(E_{t-1}^{(t)}) \le \prod_{s<t}\rho_s$. The event
$E_{t-1}^{(t)} \cap \{j \in \cO_t\}$ is $\cG_{t-1}$-measurable and is
contained in $\{j \in \cO_t\}$, so the commitment part of
Assumption \ref{as:noise} gives
\[
  \bbP(T_j = t)
  \;\le\;
  \bbP\bigl(C_t(j) \bigm| E_{t-1}^{(t)} \cap \{j \in \cO_t\}\bigr)
  \cdot \bbP\bigl(E_{t-1}^{(t)} \cap \{j \in \cO_t\}\bigr)
  \;\le\; \delta_t\prod_{s<t}\rho_s.
\]
Summing over the possible first commitment rounds gives
\[
  \bbP(j \in \cF)
  \;=\; \sum_{t=1}^{S_{\max}}\bbP(T_j = t)
  \;\le\; \sum_{t=1}^{S_{\max}}\delta_t\prod_{s<t}\rho_s,
\]
which proves the single-factor bound of Proposition \ref{prop:inactive_commitment}.
Summing over the $p - k$ noise factors yields \eqref{eq:efp}.
\end{proof}

\begin{corollary}[Observed-size FDP bound]
\label{cor:fdp}
Define
\[
  B = \sum_{t=1}^{S_{\max}}\delta_t\prod_{s<t}\rho_s.
\]
Under Assumption \ref{as:noise}, for any deterministic $m \ge 1$,
\begin{equation}
  \bbE\bigl[\mathrm{FDP}\,\mathbf 1\{|\cF| \ge m\}\bigr]
  \;\le\;
  \frac{(p - k)\,B}{m}.
  \label{eq:efdp}
\end{equation}
This expectation bound depends on the survival and commitment
constants through $B$ and on the deterministic size threshold $m$.
\end{corollary}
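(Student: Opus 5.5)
The bound follows from a pointwise inequality on the event $\{|\cF| \ge m\}$, after which we take expectations and apply the second conclusion of Proposition \ref{prop:inactive_commitment}.

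\textbf{Pointwise step.} On $\{|\cF|\ge m\}$ we have $|\cF|\vee 1 = |\cF| \ge m \ge 1$. Since $m$ is deterministic, this gives
\[
  \mathrm{FDP}\,\mathbf 1\{|\cF|\ge m\}
  \;=\; \frac{|\cF\cap\cN|}{|\cF|}\,\mathbf 1\{|\cF|\ge m\}
  \;\le\; \frac{|\cF\cap\cN|}{m}.
\]
On the complementary event the left-hand side is $0$, and the right-hand side is nonnegative. The inequality therefore holds on all of $\Omega$.

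\textbf{Expectation step.} Taking expectations and using linearity, with $m$ a constant, we obtain
\[
  \bbE\bigl[\mathrm{FDP}\,\mathbf 1\{|\cF|\ge m\}\bigr]
  \;\le\; \frac{\bbE|\cF\cap\cN|}{m}.
\]
Under Assumption \ref{as:noise}, the bound \eqref{eq:efp} gives $\bbE|\cF\cap\cN| \le (p-k)B$. Substituting yields \eqref{eq:efdp}.

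\textbf{Where care is needed.} There is no real obstacle here. The only point to watch is that $m$ must be deterministic, or at least not data-dependent, so that it can be pulled out of the expectation. A random threshold such as $m=|\cF|$ would not allow this step, which is why the corollary conditions on a fixed size threshold.

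Measurability of $\{|\cF|\ge m\}$ is immediate, since $\cF$ is a function of the data. No further use of the filtration $\cG_t$ is needed beyond what Proposition \ref{prop:inactive_commitment} already used.
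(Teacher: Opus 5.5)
Your proof is correct and follows the paper's argument: on $\{|\cF|\ge m\}$ you bound $\mathrm{FDP}$ by $|\cF\cap\cN|/m$, multiply by the indicator, take expectations, and apply \eqref{eq:efp}. Your explicit check of the complementary event and of $|\cF|\vee 1=|\cF|$ on the event makes clear why the pointwise inequality holds everywhere, which the paper's shorter proof leaves implicit.
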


\begin{proof}[Proof of Corollary \ref{cor:fdp}]
On the event $\{|\cF| \ge m\}$,
\[
  \mathrm{FDP}
  \;=\;
  \frac{|\cF \cap \cN|}{|\cF|\vee 1}
  \;\le\;
  \frac{|\cF \cap \cN|}{m}.
\]
Multiplying by the event indicator, taking expectations, and applying
\eqref{eq:efp} gives \eqref{eq:efdp}.
\end{proof}

\begin{corollary}[Clean-ladder special case]
\label{cor:clean_ladder}
If $\delta_t = 0$ for every $t < s_0$, then
\begin{equation}
  \bbE\bigl[|\cF \cap \cN|\bigr]
  \;\le\;
  (p-k)\sum_{t=s_0}^{S_{\max}}\delta_t\prod_{s<t}\rho_s.
  \label{eq:clean_ladder}
\end{equation}
Thus the first possible inactive commitment inherits the survival
product through the clean rounds $1,\ldots,s_0-1$.
\end{corollary}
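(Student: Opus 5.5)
The statement to prove is Corollary \ref{cor:clean_ladder}. My plan is to read it off from the bound \eqref{eq:efp} of Proposition \ref{prop:inactive_commitment}, already established above. No new probabilistic argument should be needed.

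First, I fix an inactive factor $j\in\cN$ and recall the single-factor bound \eqref{eq:inactive_commitment}:
\[
  \bbP(j\in\cF)\le\sum_{t=1}^{S_{\max}}\delta_t\prod_{s<t}\rho_s .
\]
Under the hypothesis $\delta_t=0$ for every $t<s_0$, each summand with $t<s_0$ is $0\cdot\prod_{s<t}\rho_s=0$. This holds because the $\rho_s$ are finite constants; they are probability bounds, so without loss of generality $\rho_s\le 1$. The sum therefore collapses to the range $t=s_0,\ldots,S_{\max}$.

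Second, I write $|\cF\cap\cN|=\sum_{j\in\cN}\mathbf 1\{j\in\cF\}$ and take expectations by linearity. By Assumption \ref{as:noise}, the constants do not depend on the identity of $j$, so the same truncated bound applies to each of the $p-k$ inactive factors. Summing gives \eqref{eq:clean_ladder}.

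Third, for the interpretive sentence, I note that every surviving term has $t\ge s_0$. Its product $\prod_{s<t}\rho_s$ therefore contains $\prod_{s<s_0}\rho_s$ as a factor, so each term carries the survival product through the clean rounds $1,\ldots,s_0-1$.

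There is essentially no obstacle. The only point to check is that the truncation is legitimate term by term. This requires the product factors to be finite, which holds since $\rho_s$ bound probabilities.

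Alternatively, one could redo the first-commitment-round decomposition from the proof of Proposition \ref{prop:inactive_commitment}. In that route one observes that $\bbP(T_j=t)\le\delta_t\prod_{s<t}\rho_s=0$ for $t<s_0$. I would present the direct substitution instead, since it is shorter.
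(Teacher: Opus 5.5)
Your proposal is correct and follows essentially the paper's route. The paper gives no separate proof; its Remark on the clean ladder reads the bound off \eqref{eq:inactive_commitment} exactly as you do. The terms with $t<s_0$ vanish, summing over the $p-k$ inactive factors as in \eqref{eq:efp} gives the result, and each remaining term contains the product $\prod_{s<s_0}\rho_s$.
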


\subsection{Stage 1 Run Bound}

\begin{proposition}[GQ elimination rate and Stage 1 run count]
\label{prop:elim}
Consider GQ screening with parameters $c, q \in (0, 1)$, the default
LassoCV scoring, and linear-interpolation sample quantiles, starting
from $|\cO_1| = |\cC_1| = p \ge 2$ open candidates, and define
$q_{\min} = q \cdot \min(1, c)$.

\medskip\noindent
\textup{(a)} Every executed round satisfies
$|\cO_{s+1}| \le (1 - q_{\min})\,|\cO_s|$, so
$|\cO_s| \le p (1 - q_{\min})^{s-1}$.

\medskip\noindent
\textup{(b)} After $S$ rounds the open set satisfies $|\cO_{S+1}| \le k$
whenever $S \ge \log_{1/(1-q_{\min})}(p/k)$.

\medskip\noindent
\textup{(c)} The total Stage 1 runs over $S$ executed rounds satisfy
\begin{equation}
  N_{\mathrm{S1}}
  \;\le\;
  \frac{c \, p \bigl[1 - (1 - q_{\min})^{S}\bigr]}{q_{\min}} + 2S
  \;\le\;
  \frac{p}{q} + 2S.
  \label{eq:N1bound}
\end{equation}
\end{proposition}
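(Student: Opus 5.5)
The proof of Proposition~\ref{prop:elim} is deterministic; no probability enters. For part (a), we lower-bound the number of open factors removed in round $s$. The GQ rate is $q_s = q\min(1, N_s/|\cC_s|)$. Since $N_s \ge n_1 \ge c\,p \ge c\,|\cC_s|$, the ratio satisfies $N_s/|\cC_s| \ge c$, so $q_s \ge q\min(1,c) = q_{\min}$; this uses $|\cC_s|\le p$ and cumulative runs.

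Next we use linear-interpolation quantiles. Let $m = |\cO_s|$ open scores be given. The threshold $\tau_s$ is the $q_s$-quantile, so at least $\lceil q_s (m-1)\rceil + 1$ (in particular at least $q_s m$) open scores satisfy $I_j \le \tau_s$. These factors are marked. Each open factor then leaves $\cO$ either by elimination or by being committed to $\cF$. Committing only removes factors from the open set, since fixed factors are no longer open. Hence $|\cO_{s+1}| \le m - \#\{\text{marked}\} \le (1-q_s)m \le (1-q_{\min})m$. The edge cases $m=1$ (direct commit gives $|\cO_{s+1}|=0$) and ties at the threshold only help. Iterating gives $|\cO_s| \le p(1-q_{\min})^{s-1}$.

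Part (b) is immediate from (a): $p(1-q_{\min})^{S}\le k$ iff $S \ge \log_{1/(1-q_{\min})}(p/k)$.

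For part (c), we bound each round's runs by $n_s = \max(2,\lceil c|\cO_s|\rceil) \le c|\cO_s| + 2$. Summing and using (a), $N_{\mathrm{S1}} \le c p\sum_{s=1}^{S}(1-q_{\min})^{s-1} + 2S$, which is the geometric sum in \eqref{eq:N1bound}. For the second inequality we bound $c p[1-(1-q_{\min})^S]/q_{\min} \le cp/q_{\min}$. If $c\le 1$ then $q_{\min} = qc$, so this equals $p/q$.

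Proposition~\ref{prop:budget} then follows. The stopping rules give $S\le S_{\max}$, and (b) with $k\ge 1$ shows the open set is exhausted after $O(\log p)$ rounds, giving $S=O(\log p)$ for fixed $c,q$.

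The main obstacle is the counting step in (a), namely that the quantile cut marks at least a $q_s$ fraction of the open set. Two facts are needed: the interpolation convention must guarantee $\#\{I_j\le\tau_s\}\ge q_s m$, and the elbow commitment must not "rescue" marked factors back into $\cO_{s+1}$. We will argue the latter by noting that rescued factors move to $\cF$, not to $\cO$, so they still leave the open set.
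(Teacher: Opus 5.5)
Your bound $q_s \ge q_{\min}$, part (b), and the geometric sum in part (c) follow the paper's argument exactly. The counting step in (a), however, has a genuine gap.

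With linear-interpolation quantiles and $h=(m-1)q_s$, the threshold $\tau_s$ lies between $I_{(\lfloor h\rfloor+1)}$ and $I_{(\lfloor h\rfloor+2)}$. When $h$ is not an integer and the scores are distinct, it lies strictly below the latter. So only $\lfloor h\rfloor+1$ open scores are guaranteed to satisfy $I_j\le\tau_s$, not $\lceil h\rceil+1$ as you claim.

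The quantity $\lfloor (m-1)q_s\rfloor+1$ can fall below $q_s m$, so your ``in particular at least $q_s m$'' is false. Concretely, take $p=5$ and $c=q=0.8$. Round 1 has $n_1=4$, $q_1=0.8\cdot 4/5=0.64=q_{\min}$ and $h=4\cdot 0.64=2.56$. With distinct scores exactly $3$ factors are marked, so marking alone leaves $2$ open factors. Part (a), however, requires $|\cO_2|\le(1-0.64)\cdot 5=1.8$. No argument that counts only marked factors can prove (a).

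The missing idea is that the elbow commitment is an additional source of removal, not merely something that fails to rescue marked factors back into the open set. Suppose some open score exceeds $\tau_s$. Since all scores are nonnegative, $\tau_s\ge 0$, so the top open score is strictly positive and its sign is recorded. Because the elbow index satisfies $r^\star\ge 1$, that top factor is committed. Because its score lies above $\tau_s$, it is distinct from every marked factor. This yields
$|\cO_{s+1}|\le m-(\lfloor h\rfloor+1)-1\le m-(m-1)q_s-1=(m-1)(1-q_s)\le(1-q_{\min})m$,
using $\lfloor h\rfloor+1\ge h$. In the example above this gives $|\cO_2|\le 1$, as required.

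If instead no open score exceeds $\tau_s$, every open factor is marked and leaves, so $|\cO_{s+1}|=0$. This is how the paper closes part (a). With this extra removal added, the rest of your proof goes through unchanged.
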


\begin{proof}[Proof of Proposition \ref{prop:elim}]
\textup{(a)} The cumulative run count increases, while the
candidate set does not grow. Thus, in every executed round,
$N_s \ge N_1=\max(2,\lceil cp\rceil)\ge cp$ and
$|\cC_s|\le p$. Hence $N_s/|\cC_s|\ge c$, and
$q_s=q\min(1,N_s/|\cC_s|)\ge q\min(1,c)=q_{\min}$.

Write $m=|\cO_s|$ and first suppose $m\ge2$. Order the open scores as $I_{(1)}\le\cdots\le I_{(m)}$,
and set $h=(m-1)q_s$. The threshold $\tau_s$ is obtained by linear interpolation between $I_{(\lfloor h\rfloor+1)}$ and $I_{(\lfloor h\rfloor+2)}$. Thus $\tau_s\ge I_{(\lfloor h\rfloor+1)}$, so at least
$\lfloor h\rfloor+1$ open scores lie at or below $\tau_s$. These factors leave the open set by either elimination or commitment. If some open score exceeds $\tau_s$, the largest open score is positive, so its sign is recorded and the elbow prefix commits at least the top-scoring open factor, which also leaves the open set. Then, using
$\lfloor h \rfloor + 1 \ge h$,
\[
  |\cO_{s+1}|
  \;\le\; m - (\lfloor h \rfloor + 1) - 1
  \;\le\; m - (m - 1)\, q_s - 1
  \;=\; (m - 1)(1 - q_s)
  \;\le\; (1 - q_{\min})\, m.
\]
If no open score exceeds $\tau_s$, every open factor is at or below the
threshold, so all of them leave the open set and $|\cO_{s+1}| = 0$.
If $m=1$, this is a later round because $p\ge2$. Its open factor
survived the preceding threshold without being fixed, so it had a
positive score and its direction was recorded. The remaining open factor is therefore committed, leaving the open set empty. Iterating over executed
rounds yields $|\cO_s| \le p (1 - q_{\min})^{s-1}$.

\noindent
\textup{(b)} By part (a), $|\cO_{S+1}| \le p (1 - q_{\min})^{S} \le k$
once $S \ge \log_{1/(1-q_{\min})}(p/k)$.

\noindent
\textup{(c)} The total number of runs is
$N_{\mathrm{S1}}=\sum_{s=1}^{S}n_s$.
The allocation rule and part (a) give
\[
  n_s=\max(2,\lceil c|\cO_s|\rceil)
  \le c|\cO_s|+2
  \le cp(1-q_{\min})^{s-1}+2.
\]
Summing over $s=1,\ldots,S$ gives
\[
  N_{\mathrm{S1}}
  \;\le\;
  c \, p \sum_{s=0}^{S-1} (1 - q_{\min})^{s} + 2S
  \;=\;
  \frac{c \, p \bigl[1 - (1 - q_{\min})^{S}\bigr]}{q_{\min}} + 2S.
\]
Since $1 - (1 - q_{\min})^{S} \le 1$ and $q_{\min} = qc$ for
$c \in (0, 1)$, the simpler bound
$N_{\mathrm{S1}} \le c\,p/q_{\min} + 2S = p/q + 2S$ follows.
\end{proof}

Part (a) bounds the number of open factors by a geometric sequence, similar to the reduction in candidate numbers in successive halving \citep{jamieson2016non}. The two methods collect data differently, so the guarantees for successive halving do not directly apply here.

\subsection{Round-Level Diagnostics for Assumption \ref{as:noise}}\label{app:noise_diagnostics}

To examine the constants of Assumption \ref{as:noise} in the operating regime of the procedure, we record the survival and commitment counts directly during Stage 1 execution on the $\Sthree$ arm of the check under a fixed total budget (Appendix \ref{app:fixed_budget}) at the default split
$N_{\mathrm{S1}}/N = 0.5$, with $R = 500$ replications per cell. The round-level survival rates
agree with the matched comparison of Section \ref{sec:main_comparison}. At each round $t$, the log
records the open inactive count $|\cN \cap \cO_t|$, the number of open
inactive factors whose score exceeds the elimination threshold, and the
number committed by the elbow rule. The empirical diagnostics are
\begin{equation}
  \hat\rho_t =
  \frac{\textstyle \sum_{r = 1}^{R}
        |\{j \in \cN \cap \cO_t^{(r)} : I_j^{(t, r)} > \tau_t^{(r)}\}|}
       {\textstyle \sum_{r = 1}^{R} |\cN \cap \cO_t^{(r)}|},
  \qquad
  \hat\delta_t =
  \frac{\textstyle \sum_{r = 1}^{R}
        |\{j \in \cN \cap \cO_t^{(r)} : C_t^{(r)}(j)\}|}
       {\textstyle \sum_{r = 1}^{R} |\cN \cap \cO_t^{(r)}|}.
  \label{eq:rho_delta_direct}
\end{equation}
The commitment count includes factors fixed at or below the threshold,
so $\hat\delta_t$ need not be smaller than $\hat\rho_t$.
These rates summarize average survival and commitment across
the simulated histories and provide empirical diagnostics
for Assumption \ref{as:noise}.

\begin{table}[H]
\centering
\caption{Pooled round-level survival and commitment diagnostics for the
$\Sthree$ Stage 1 arm at the default split, with $R = 500$ replications
per cell.
The $\hat\rho_{\ge 3}$ and $\hat\delta_{\ge 3}$ columns pool every round from the third onward, where only a few inactive factors remain open.
The column $s_0$ is set by the prespecified rule: the largest
$s_0 \in \{1, \ldots, 4\}$ with $\hat\delta_t < 0.01$ for every
$t < s_0$.}
\label{tab:noise_diagnostics}
\small
\begin{tabular}{c c rrr rrr c}
\toprule
Case & $p$ & $\hat\rho_1$ & $\hat\rho_2$ & $\hat\rho_{\ge 3}$ &
$\hat\delta_1$ & $\hat\delta_2$ & $\hat\delta_{\ge 3}$ & $s_0$ \\
\midrule
1 & 40  & $0.278$ & $0.225$ & $0.214$ & $0.0042$ & $0.079$ & $0.319$ & $2$ \\
1 & 80  & $0.196$ & $0.249$ & $0.248$ & $0.0002$ & $0.038$ & $0.290$ & $2$ \\
1 & 120 & $0.137$ & $0.253$ & $0.246$ & $<0.0001$ & $0.036$ & $0.268$ & $2$ \\
1 & 200 & $0.102$ & $0.259$ & $0.245$ & $<0.0001$ & $0.017$ & $0.216$ & $2$ \\
\midrule
2 & 40  & $0.227$ & $0.285$ & $0.343$ & $0.0177$ & $0.152$ & $0.480$ & $1$ \\
2 & 80  & $0.157$ & $0.297$ & $0.347$ & $0.0059$ & $0.134$ & $0.456$ & $2$ \\
2 & 120 & $0.123$ & $0.310$ & $0.359$ & $0.0043$ & $0.101$ & $0.457$ & $2$ \\
2 & 200 & $0.084$ & $0.305$ & $0.333$ & $0.0006$ & $0.095$ & $0.448$ & $2$ \\
\bottomrule
\end{tabular}
\end{table}

The pooled first-round commitment rate is
$915 / 420000 = 0.00218$, so the prespecified pooled rule selects
$s_0 = 2$. The same pooled decision holds within each case. Seven of
the eight case-by-$p$ summaries also select $s_0 = 2$. The exception is
the weak-signal small-$p$ cell, Case 2 at $p = 40$, where
$\hat\delta_1 = 0.0177$. The second round is not clean: the pooled
$\hat\delta_2$ is $0.0717$, and the case-by-$p$ values range from
$0.017$ to $0.152$. The observed first-round commitments support retaining
the general $\delta_t$ terms in the bound.

Extending the diagnostics past the second round, the pooled survival
rate $\hat\rho_{\ge3}$ stays between $0.21$ and $0.36$, while the
commitment rate $\hat\delta_{\ge3}$ rises to between $0.22$ and $0.48$
as the open inactive set shrinks to a few factors per replicate.
The sum $\hat B=\sum_t\hat\delta_t\prod_{s<t}\hat\rho_s$
over the recorded rounds gives a committed-count bound
$(p-k)\hat B$ of $2.0$ to $2.3$ in Case 1 and $3.7$ to $5.4$
in Case 2, which exceeds the observed mean $|\cF\cap\cN|$
($1.5$ to $1.6$ in Case 1, $2.4$ to $3.5$ in Case 2)
in every cell. Both $\hat B$ and the observed counts are calculated
from the same logs, so this comparison checks their internal
consistency. The rising $\hat\delta_t$ is offset by the geometrically
contracting survival weight $\prod_{s<t}\hat\rho_s$, so each round
past the second adds less than $0.04$ to $\hat B$.

\subsection{Empirical Validation of the Positive-Cone Penalty}
\label{app:cone_validation}

The positive-cone enforcement in \eqref{eq:welch_crit} relies on the
penalty weight $\lambda = 10^6$ acting as an effectively hard
constraint. To verify this we record, for every recovered Stage 1
design across the simulation grid, the number of negative off-diagonal
pairs $|\{(i, j): i < j, s_{ij} < 0\}|$ and the minimum off-diagonal
inner product $\min_{i < j} s_{ij}$.

\begin{table}[H]
\centering
\caption{Positive-cone diagnostics on the
final selected columns of $\Var(s_+^{\mathrm{W}})$ recovered
designs at the end of Stage 1, aggregated across all $63$ $(c, q)$
cells $\times$ $R = 500$ replications per cell. ``frac zero'' is the
fraction of replications with no negative pair in the recovered
selected submatrix. Replications selecting fewer than two factors have
no off-diagonal pair and are excluded.}
\label{tab:negpair_diag}
\small
\begin{tabular}{c c r rr c rr}
\toprule
Case & $p$ & $n$ designs
  & mean neg-pair & max neg-pair
  & frac zero
  & mean $\min s$ & worst $\min s$ \\
\midrule
1 & 40  & $31{,}464$ & $0.122$ & $19$ & $94.9\%$ & $+0.86$ & $-17$ \\
1 & 80  & $31{,}500$ & $0.113$ & $24$ & $94.4\%$ & $+2.04$ & $-21$ \\
1 & 120 & $31{,}500$ & $0.070$ & $16$ & $96.3\%$ & $+4.05$ & $-24$ \\
1 & 200 & $31{,}500$ & $0.025$ & $16$ & $98.5\%$ & $+8.44$ & $-24$ \\
\midrule
2 & 40  & $29{,}700$ & $0.317$ & $44$ & $89.3\%$ & $+0.99$ & $-25$ \\
2 & 80  & $31{,}320$ & $0.245$ & $35$ & $90.5\%$ & $+2.12$ & $-58$ \\
2 & 120 & $31{,}500$ & $0.132$ & $48$ & $94.5\%$ & $+4.24$ & $-63$ \\
2 & 200 & $31{,}500$ & $0.070$ & $41$ & $96.6\%$ & $+8.33$ & $-55$ \\
\bottomrule
\end{tabular}
\end{table}

Across $\approx 250{,}000$ recovered designs the mean negative-pair
count among the selected columns is at most $0.32$ and the fraction of selected submatrices in the positive cone ranges from $89.3\%$
(Case 2, $p = 40$) to $98.5\%$ (Case 1, $p = 200$). The diagnostic
is restricted to the selected columns because those are the design
columns that actually enter Stage 2 modeling. The cumulative
Stage 1 design contains many additional columns that were
subsequently eliminated and whose pairwise inner products are not
relevant downstream. The empirical exception of a small boundary
cohort sits in the weak-signal Case 2 at low $p$, where coordinate
exchange terminates before all pairs cross into the positive cone.

\subsection{Mis-fixation and Late Correction}
\label{app:misfix}

Locking committed factors at $\pm 1$ preserves their contribution to
interactions in subsequent rounds, but in principle propagates errors if a
factor is fixed at the wrong sign or a noise factor is committed.
Empirically, at the representative cell $(c, q) = (0.50, 0.65)$ and
$k_{\mathrm{true}} = 5$, the fraction of replications with zero noise in
the final fixed set $\cF$ is $30$--$44\%$ for Case 1 and $8$--$14\%$ for
Case 2 across $p \in \{40, 80, 120, 200\}$, with mean $|\cF \cap \cN|$
of $1.2$--$1.5$ (Case 1) and $2.6$--$3.4$ (Case 2), and most of this
commitment occurs after the open inactive set has already contracted
(Table \ref{tab:noise_diagnostics}). A possible late correction is to recompute $\sign(\hat\beta_j)$
from the terminal cumulative fit. This would change signs without
adding runs or changing the selected set, but has not been evaluated here.

\section{Simulation Details and Metrics Accounting for Class Imbalance}
\label{app:simulation_details}

\subsection{Test Case Coefficients}

\begin{table}[H]
\centering
\caption{True coefficients for the two main cases at $k_{\mathrm{true}} = 5$
(intercept $\beta_0 = 20$, $\sigma = 1.0$, $\sigma_b = 1.0$). The interaction
and quadratic structure is shared between the two cases, and only the
main-effect coefficient vector differs. For $k_{\mathrm{true}} \ne 5$,
main-effect coefficients are taken from the first $k_{\mathrm{true}}$
entries, and the number of interactions and quadratics grows with
$k_{\mathrm{true}}$ (capped at $5$ interactions and $4$ quadratics).}
\label{tab:cases}
\small
\begin{tabular}{l *{5}{c} @{\hspace{1em}} *{2}{c} @{\hspace{1em}} *{2}{c}
  @{\hspace{1em}} c}
\toprule
 & \multicolumn{5}{c}{Main effects}
 & \multicolumn{2}{c}{Interactions}
 & \multicolumn{2}{c}{Quadratics} & \\
\cmidrule(lr){2-6}\cmidrule(lr){7-8}\cmidrule(lr){9-10}
Case
 & $x_0$ & $x_1$ & $x_2$ & $x_3$ & $x_4$
 & $x_0 x_1$ & $x_1 x_2$
 & $x_0^2$ & $x_3^2$
 & Regime \\
\midrule
Case 1
 & $4.0$ & $3.0$ & $2.0$ & $1.5$ & $1.0$
 & $2.0$ & $-1.5$
 & $-1.5$ & $1.0$
 & Main-effect dominated \\
Case 2
 & $2.0$ & $1.5$ & $1.0$ & $0.5$ & $0.5$
 & $2.0$ & $-1.5$
 & $-1.5$ & $1.0$
 & Weak signal \\
\bottomrule
\end{tabular}
\end{table}

Table \ref{tab:cases} fixes the $k_{\mathrm{true}} = 5$ coefficients.
For the $k$-sweep of Table \ref{tab:ksweep}, the main-effect vector is
the length-$k_{\mathrm{true}}$ prefix of
$[4.0, 3.0, 2.0, 1.5, 1.0, 0.8, 0.5, 0.3, 0.2, 0.1, 0.05, 0.03]$
(Case 1) and
$[2.0, 1.5, 1.0, 0.5, 0.5, 0.3, 0.2, 0.15, 0.1, 0.05, 0.03, 0.02]$
(Case 2), and the interaction and quadratic terms grow with
$k_{\mathrm{true}}$ as listed in Table \ref{tab:case_ksweep}.

\begin{table}[H]
\centering
\caption{Interaction and quadratic coefficients activated at each
$k_{\mathrm{true}}$ used in the $k$-sweep (Table \ref{tab:ksweep}),
shared by Cases 1 and 2. Interactions sit on the consecutive pairs
$(x_0, x_1), (x_1, x_2), \ldots$ in order and quadratics on indices
$\{0, 3, 6, 9\}$ in order, so the $j$-th listed interaction
coefficient multiplies $x_{j-1} x_j$ and the $j$-th quadratic
coefficient multiplies $x_{3(j-1)}^2$.}
\label{tab:case_ksweep}
\small
\begin{tabular}{c l l}
\toprule
$k_{\mathrm{true}}$ & Interaction coefficients & Quadratic coefficients \\
\midrule
$4$  & $2.0,\ -1.5$                    & $-1.5,\ 1.0$ \\
$5$  & $2.0,\ -1.5$                    & $-1.5,\ 1.0$ \\
$6$  & $2.0,\ -1.5,\ 1.2$              & $-1.5,\ 1.0$ \\
$7$  & $2.0,\ -1.5,\ 1.2$              & $-1.5,\ 1.0,\ -0.8$ \\
$8$  & $2.0,\ -1.5,\ 1.2,\ -1.0$       & $-1.5,\ 1.0,\ -0.8$ \\
$9$  & $2.0,\ -1.5,\ 1.2,\ -1.0$       & $-1.5,\ 1.0,\ -0.8$ \\
$10$ & $2.0,\ -1.5,\ 1.2,\ -1.0,\ 0.8$ & $-1.5,\ 1.0,\ -0.8,\ 0.6$ \\
$11$ & $2.0,\ -1.5,\ 1.2,\ -1.0,\ 0.8$ & $-1.5,\ 1.0,\ -0.8,\ 0.6$ \\
$12$ & $2.0,\ -1.5,\ 1.2,\ -1.0,\ 0.8$ & $-1.5,\ 1.0,\ -0.8,\ 0.6$ \\
\bottomrule
\end{tabular}
\end{table}

\subsection{Metrics Accounting for Class Imbalance}

Type I and Type II errors are the proportions of inactive factors
incorrectly selected and of active factors missed, respectively. At
cell-mean level we compute confusion matrix entries
$\mathrm{TP} = k (1 - \mathrm{TypeII})$,
$\mathrm{FP} = (p - k) \cdot \mathrm{TypeI}$,
$\mathrm{FN} = k \cdot \mathrm{TypeII}$,
$\mathrm{TN} = (p - k) (1 - \mathrm{TypeI})$, and define
\begin{equation}
  F_1 = \frac{2\,\mathrm{TP}}{2\,\mathrm{TP} + \mathrm{FP} + \mathrm{FN}},
  \qquad
  \mathrm{MCC} = \frac{\mathrm{TP}\,\mathrm{TN}
    - \mathrm{FP}\,\mathrm{FN}}
    {\sqrt{(\mathrm{TP}+\mathrm{FP})(\mathrm{TP}+\mathrm{FN})
    (\mathrm{TN}+\mathrm{FP})(\mathrm{TN}+\mathrm{FN})}}.
  \label{eq:f1mcc}
\end{equation}
$F_1$ and MCC depend on both error rates and the relative numbers of active and inactive factors.
\section{Baseline Screening Methods}
\label{app:baselines}

This appendix justifies the external baseline choices used in the main text: LassoCV as a standard one-shot comparator, the screening behavior of five alternative one-shot rules across the
$n/p$ regime, a comparison against stability selection at matched budgets, and stricter penalties on the same Lasso path. All baselines are evaluated inside the same two-level design framework as the main simulation study, with $R = 500$ replications per row.

\subsection{One-shot LassoCV versus OLS-top-15}
\label{app:lassocv}

One-shot LassoCV is the natural comparator in the
supersaturated regime because it implicitly performs variable selection
through coefficient sparsity. Ordinary least squares performs no
such selection, so we truncate it to the top-$15$ factors ranked by
absolute coefficient. The truncation level of 15 factors matches the cap used
for Stage 2. At $n \le p$, OLS is statistically unstable. We evaluate both methods on Case 1
($p = 80$, $k_{\mathrm{true}} = 5$, $\Var(s_+^{\mathrm{W}})$
design) across nine $n/p$ ratios densely sampled around the
crossover $n = p$.

\begin{table}[H]
\centering
\caption{LassoCV versus OLS top-15 on two-level designs, Case 1,
$p = 80$, $k_{\mathrm{true}} = 5$, $R = 500$ replications.
Each row reports cell means.}
\label{tab:lassocv_vs_ols}
\small
\begin{tabular}{c c rrr rrr}
\toprule
$n/p$ & $n$
  & La $k_{\mathrm{sel}}$ & La T1 & La T2
  & OLS $k_{\mathrm{sel}}$ & OLS T1 & OLS T2 \\
\midrule
0.625 & 50  & 19.71 & 0.198 & 0.034 & 15 & 0.144 & 0.167 \\
0.750 & 60  & 19.56 & 0.195 & 0.017 & 15 & 0.143 & 0.147 \\
0.875 & 70  & 19.26 & 0.191 & 0.009 & 15 & 0.147 & 0.200 \\
1.000 & 80  & 17.73 & 0.170 & 0.010 & 15 & 0.173 & 0.596 \\
1.125 & 90  & 17.95 & 0.173 & 0.008 & 15 & 0.142 & 0.127 \\
1.250 & 100 & 17.79 & 0.171 & 0.004 & 15 & 0.139 & 0.084 \\
1.500 & 120 & 17.46 & 0.166 & 0.002 & 15 & 0.134 & 0.017 \\
1.750 & 140 & 17.64 & 0.169 & 0.001 & 15 & 0.134 & 0.006 \\
2.000 & 160 & 17.48 & 0.166 & 0.000 & 15 & 0.133 & 0.001 \\
\bottomrule
\end{tabular}
\end{table}

OLS top-15 is markedly unstable near the boundary $n = p$.
At $n/p = 1.0$, the Type II error jumps to $0.596$, almost three
times its value at $n/p = 0.875$ and nearly five times its value at
$n/p = 1.125$. The pattern reflects how unstable the OLS coefficient estimates become when the design is near rank-deficient.

\subsection{One-shot LassoCV versus five alternative one-shot rules}
\label{app:onealt}

To check that LassoCV is a competitive default one-shot comparator,
Table \ref{tab:six_alt} compares it with five alternative one-shot
rules at $n/p\in\{0.625,1.0,1.5\}$ in Cases 1 and 2.

\begin{table}[H]
\centering
\caption{Six one-shot baselines on two-level designs, $p = 80$,
$k_{\mathrm{true}} = 5$, $R = 500$, at three sample-size ratios
$n/p$. Each row is one method. Entries report the cell-mean selected
size $k_{\mathrm{sel}}$, the Type I (T1) and Type II (T2) errors, and
the end-to-end Stage 2 LOSS at each method's own selection (capped at
the Stage 2 factor limit of 15).}
\label{tab:six_alt}
\footnotesize
\setlength{\tabcolsep}{3.4pt}
\begin{tabular}{l rrrr rrrr rrrr}
\toprule
 & \multicolumn{4}{c}{$n/p = 0.625$}
 & \multicolumn{4}{c}{$n/p = 1.0$}
 & \multicolumn{4}{c}{$n/p = 1.5$} \\
\cmidrule(lr){2-5}\cmidrule(lr){6-9}\cmidrule(lr){10-13}
Method & $k_{\mathrm{sel}}$ & T1 & T2 & LOSS
       & $k_{\mathrm{sel}}$ & T1 & T2 & LOSS
       & $k_{\mathrm{sel}}$ & T1 & T2 & LOSS \\
\midrule
\multicolumn{13}{l}{\textbf{Case 1 (main-effect dominated)}}\\
LassoCV         & 19.5 & 0.20 & 0.03 & 0.33 & 18.4 & 0.18 & 0.01 & 0.14 & 17.7 & 0.17 & 0.00 & 0.09 \\
ElasticNetCV    & 20.1 & 0.20 & 0.03 & 0.34 & 18.4 & 0.18 & 0.01 & 0.14 & 17.7 & 0.17 & 0.00 & 0.08 \\
Adaptive Lasso  & 23.2 & 0.25 & 0.05 & 0.55 & 25.6 & 0.28 & 0.02 & 0.23 & 21.9 & 0.23 & 0.01 & 0.11 \\
SCAD            & 29.0 & 0.32 & 0.03 & 0.41 & 38.6 & 0.45 & 0.01 & 0.25 & 46.1 & 0.55 & 0.00 & 0.11 \\
SIS             & 12.0 & 0.10 & 0.14 & 1.09 & 15.0 & 0.14 & 0.09 & 0.62 & 15.0 & 0.14 & 0.04 & 0.26 \\
OLS top-15      & 15.0 & 0.14 & 0.16 & 1.25 & 15.0 & 0.17 & 0.62 & 6.49 & 15.0 & 0.13 & 0.02 & 0.18 \\
\midrule
\multicolumn{13}{l}{\textbf{Case 2 (weak signal)}}\\
LassoCV         & 16.6 & 0.17 & 0.26 & 1.69 & 15.3 & 0.15 & 0.20 & 1.33 & 14.9 & 0.14 & 0.15 & 1.01 \\
ElasticNetCV    & 20.7 & 0.22 & 0.23 & 1.67 & 16.6 & 0.17 & 0.19 & 1.34 & 15.2 & 0.15 & 0.15 & 1.05 \\
Adaptive Lasso  & 27.7 & 0.32 & 0.21 & 1.76 & 24.7 & 0.27 & 0.17 & 1.42 & 21.4 & 0.23 & 0.13 & 1.11 \\
SCAD            & 25.6 & 0.29 & 0.22 & 1.76 & 33.6 & 0.39 & 0.12 & 1.39 & 41.3 & 0.49 & 0.07 & 1.12 \\
SIS             & 12.0 & 0.11 & 0.30 & 1.73 & 15.0 & 0.15 & 0.22 & 1.32 & 15.0 & 0.14 & 0.16 & 1.01 \\
OLS top-15      & 15.0 & 0.16 & 0.36 & 2.03 & 15.0 & 0.19 & 0.80 & 5.49 & 15.0 & 0.15 & 0.23 & 1.38 \\
\bottomrule
\end{tabular}
\end{table}

Among the methods without a hard top-$k$ truncation, LassoCV attains the lowest or essentially tied Type I and end-to-end LOSS in every cell, matched only by ElasticNetCV. We therefore use LassoCV as a standard and competitive one-shot comparator.

\subsection{Stability selection}
\label{app:ss}

Stability selection \citep{meinshausen2010stability} retains variables
that appear in at least a fraction $\pi_{\mathrm{thr}}$ of $B$ subsampled
LassoCV fits. Table \ref{tab:ss} compares it with $\Sthree$ at
$(c,q)=(0.50,0.65)$, $p=80$, $k_{\mathrm{true}}=5$ and $R=500$.
The one-shot Welch design uses the same number of runs as Stage 1 of $\Sthree$. One-shot LassoCV and stability selection use the same one-shot data. Each comparator and $\Sthree$ then use
$k_{\mathrm{m}}=\min(k^{\Sthree}_{\mathrm{sel}},k^{\mathrm{alt}}_{\mathrm{sel}},15)$ factors for the common Stage 2 protocol.

\begin{table}[H]
\centering
\caption{Stability selection ($B=100$, $\pi_{\mathrm{thr}}=0.6$) at
$(c,q)=(0.50,0.65)$, $p=80$, $k_{\mathrm{true}}=5$ and $R=500$.
Selection metrics and matched sizes are averaged over all
replications, including empty selections. The Pairs column gives the number of completed Stage 2 comparisons.
Paired $\Delta$LOSS is $\Sthree$ minus the comparator, with 95\% paired
$t$ intervals in brackets. Negative values favor $\Sthree$, and
intervals excluding zero are marked $\dagger$.}
\label{tab:ss}
\small
\setlength{\tabcolsep}{3.3pt}
\begin{tabular}{l rrrr r r r}
\toprule
Method & $k_{\mathrm{sel}}$ & T1 & T2 & F1 & matched-$k$ & Pairs
 & paired $\Delta$LOSS [95\% CI] \\
\midrule
\multicolumn{8}{l}{\textbf{Case 1 (main-effect dominated)}}\\
$\Sthree$ & 5.6 & 0.019 & 0.158 & 0.81 & --- & --- & --- (reference) \\
One-shot LassoCV & 19.5 & 0.195 & 0.026 & 0.45 & 5.55 & 500 & $-0.16\,[-0.28,-0.04]^{\dagger}$ \\
Stability selection & 4.6 & 0.009 & 0.209 & 0.82 & 4.30 & 500 & $-0.62\,[-0.75,-0.49]^{\dagger}$ \\
\midrule
\multicolumn{8}{l}{\textbf{Case 2 (weak signal)}}\\
$\Sthree$ & 5.7 & 0.039 & 0.438 & 0.54 & --- & --- & --- (reference) \\
One-shot LassoCV & 16.7 & 0.174 & 0.271 & 0.38 & 5.53 & 499 & $-0.10\,[-0.24,+0.04]$ \\
Stability selection & 2.3 & 0.005 & 0.612 & 0.51 & 2.32 & 402 & $-0.53\,[-0.68,-0.37]^{\dagger}$ \\
\bottomrule
\end{tabular}
\end{table}

At $\pi_{\mathrm{thr}}=0.6$, stability selection retains fewer than
$k_{\mathrm{true}}=5$ factors on average and misses about $21\%$ of the
active factors in Case 1 and $61\%$ in Case 2. Among completed matched-size
comparisons, the paired $\Delta$LOSS is $-0.62$ in Case 1 and $-0.53$ in
Case 2 in favor of $\Sthree$.

\subsection{Strict-Penalty Lasso baselines}
\label{app:lambda_strict}

The cross-validated penalty $\lambda_{\min}$ used throughout
(the five-fold \texttt{LassoCV} minimizer) is tuned for prediction rather than support recovery. Lasso support recovery requires additional design and signal conditions, such as the irrepresentable condition \citep{zhao2006model}. Under the default tuning used here, it therefore returns a comparatively large support. We compare $\lambda_{\min}$ against three stricter rules on the identical $\Var(s_+^{\mathrm{W}})$ designs, paired per replicate at $p = 80$, $k_{\mathrm{true}} = 5$, $R = 500$: the one-standard-error rule $\lambda_{1\mathrm{se}}$ (the largest penalty whose cross-validated error is within one standard error of the minimum), the Bayesian information criterion (BIC, minimizing $n \log(\mathrm{RSS}/n) + k \log n$, with RSS the residual sum of squares, over the models on the Lasso solution path rather than by cross-validation), and the extended BIC (EBIC), which adds a model-space penalty $2 \gamma k \log p$ with $\gamma = 1$ \citep{chen2008extended}. Table \ref{tab:lambda_strict} reports the raw selected size and the two error rates.

\begin{table}[H]
\centering
\caption{LassoCV versus strict-penalty Lasso on
$\Var(s_+^{\mathrm{W}})$ designs, $p = 80$, $k_{\mathrm{true}} = 5$,
$R = 500$, at three sample-size ratios $n/p$. Entries are the cell-mean
raw selected size $k_{\mathrm{raw}}$, the Type I (T1) and Type II (T2)
error rates, and the end-to-end natural-$k$ LOSS (capped at the Stage 2 factor limit of 15).}
\label{tab:lambda_strict}
\footnotesize
\setlength{\tabcolsep}{3.0pt}
\begin{tabular}{l rrrr rrrr rrrr}
\toprule
 & \multicolumn{4}{c}{$n/p = 0.625$}
 & \multicolumn{4}{c}{$n/p = 1.0$}
 & \multicolumn{4}{c}{$n/p = 1.5$} \\
\cmidrule(lr){2-5}\cmidrule(lr){6-9}\cmidrule(lr){10-13}
Penalty & $k_{\mathrm{raw}}$ & T1 & T2 & LOSS
        & $k_{\mathrm{raw}}$ & T1 & T2 & LOSS
        & $k_{\mathrm{raw}}$ & T1 & T2 & LOSS \\
\midrule
\multicolumn{13}{l}{\textbf{Case 1 (main-effect dominated)}}\\
$\lambda_{\min}$ (LassoCV) & 19.7 & 0.20 & 0.04 & 0.38 & 17.5 & 0.17 & 0.01 & 0.16 & 17.9 & 0.17 & 0.00 & 0.06 \\
$\lambda_{1\mathrm{se}}$   & 10.2 & 0.07 & 0.08 & 0.56 &  8.9 & 0.05 & 0.04 & 0.31 &  8.0 & 0.04 & 0.01 & 0.11 \\
BIC                        & 50.5 & 0.61 & 0.02 & 0.62 & 51.0 & 0.61 & 0.02 & 0.97 &  8.2 & 0.04 & 0.01 & 0.10 \\
EBIC ($\gamma = 1$)        &  3.2 & 0.00 & 0.42 & 2.19 &  4.7 & 0.01 & 0.16 & 1.09 &  5.3 & 0.01 & 0.05 & 0.30 \\
\midrule
\multicolumn{13}{l}{\textbf{Case 2 (weak signal)}}\\
$\lambda_{\min}$ (LassoCV) & 15.9 & 0.16 & 0.26 & 1.68 & 14.3 & 0.14 & 0.20 & 1.31 & 15.4 & 0.15 & 0.13 & 0.93 \\
$\lambda_{1\mathrm{se}}$   &  6.4 & 0.05 & 0.44 & 2.12 &  5.8 & 0.03 & 0.36 & 1.85 &  5.5 & 0.03 & 0.29 & 1.48 \\
BIC                        & 51.0 & 0.62 & 0.11 & 2.02 & 61.5 & 0.76 & 0.08 & 3.13 &  5.7 & 0.03 & 0.26 & 1.36 \\
EBIC ($\gamma = 1$)        &  0.6 & 0.00 & 0.89 & 2.71 &  1.3 & 0.00 & 0.75 & 2.41 &  2.6 & 0.00 & 0.49 & 2.07 \\
\bottomrule
\end{tabular}
\end{table}

Tightening the penalty from $\lambda_{\min}$ through $\lambda_{1\mathrm{se}}$ to EBIC drives Type I toward zero but raises Type II and end-to-end LOSS, most severely in the weak-signal Case 2, where EBIC's mean selected size falls below $2$ at $n \le p$ and its natural-$k$ Stage 2 reach drops to $0.17$--$0.45$. BIC is unusable at $n \le p$, where a near-zero residual sum of squares drives it to a near-saturated model ($k_{\mathrm{raw}} \approx 50$--$61$, Type I $\approx 0.6$--$0.76$), and recovers a sensible selection only once $n > p$. We therefore use LassoCV at $\lambda_{\min}$ as the default
baseline for comparing selected set sizes.

\section{Scorer Comparison}
\label{app:scorer}

The GQ framework is scorer-agnostic (Remark \ref{rem:scorer}). The
importance scorer is a plug-in, and the default
$|\hat\beta_j|$ can be swapped without touching the rest of the
pipeline. This appendix demonstrates the swap on the
interaction-dominated Case 3, where the default linear model is
known to fail. Case 3 has very weak main effects and amplified
second-order structure (coefficients in Table \ref{tab:scorer_case3}).
The linear model \eqref{eq:surrogate} cannot rank factors that
appear only inside products, and a $|\hat\beta_j|$-driven $\Sthree$
run terminates Stage 1 prematurely. We replace the Stage 1
importance scorer by each of four alternatives in turn,
holding
the rest of the pipeline (the $\Var(s_+^{\mathrm{W}})$ design,
the GQ schedule, the Stage 2 D-optimal
augmentation, and the BIC selection) fixed, and report the four
runs at $(c, q) = (0.50, 0.60)$, $p = 80$, $R = 500$
(Table \ref{tab:scorer_case3}).

\begin{table}[H]
\centering
\caption{Four scorers in the $\Sthree$ pipeline at $(c,q)=(0.50,0.60)$
on the interaction-dominated Case 3, with $p=80$, $k_{\mathrm{true}}=5$
and $R=500$. Main effects are $\beta_j=0.3$ for $0\le j\le4$,
with $\beta_{01}=5$, $\beta_{12}=-4$, $\beta_{00}=3$ and $\beta_{33}=-2.5$.
The scorers are distance correlation \citep{szekely2007measuring,li2012feature},
Ball correlation \citep{pan2020ball}, random forest SHAP importance
\citep{lundberg2017unified} and Gini importance \citep{breiman2001random}.
The size column gives mean numbers of selected factors before and after
the Stage 2 cap of 15. Type I and Type II use the raw selections. LOSS is evaluated after Stage 2 optimization using the capped selections. $N_{\mathrm{S1}}$ and
$N_{\mathrm{tot}}$ are mean run counts.}
\label{tab:scorer_case3}
\small
\setlength{\tabcolsep}{3pt}
\begin{tabularx}{\textwidth}{@{}l X r r r r r r@{}}
\toprule
Scorer & basis & Selected size
 & $N_{\mathrm{S1}}$ & $N_{\mathrm{tot}}$ & LOSS & Type I & Type II \\
\midrule
dCor & marginal dep.\ (distance) & 15.66 / 11.18 & 78.5 & 114.2 & 9.27 & 0.1854 & 0.649 \\
BCor & marginal dep.\ (ball) & 18.18 / 11.23 & 77.1 & 115.7 & 7.43 & 0.2102 & 0.517 \\
SHAP importance & random forest (Shapley) & 9.26 / 8.94 & 81.5 & 99.9 & 9.01 & 0.1049 & 0.722 \\
\textbf{Gini importance} & random forest (impurity) & \textbf{9.96 / 9.34} & \textbf{81.3} & \textbf{102.2} & \textbf{6.74} & \textbf{0.1063} & \textbf{0.603} \\
\bottomrule
\end{tabularx}
\end{table}

Replacing the Stage 1 scorer
moves screening quality substantially, with Gini importance
attaining the lowest LOSS ($6.74$) and a near-lowest Type I rate
($0.106$) jointly, BCor attaining the lowest Type II ($0.517$).

\section{Design Criterion Ablation}
\label{app:criterion}

This appendix justifies the choice of $\Var(s_+^{\mathrm{W}})$ as the $\Sthree$ design criterion against three alternative two-level SSD criteria: $E(s^2)$ \citep{booth1962some}, $\Var(s_+)$
\citep{weese2021strategies}, and SS\_EC \citep{singh2023selection}. We first run two ablations on the two-level design space at $p \in \{40, 80, 120, 200\}$, $R = 500$ replications. We then record the scope of the criterion for quadratics and known interactions through the $\varphi^+$ pilot.

\paragraph{Coordinate-exchange implementation.}
For each round we initialize the $n_t \times |\cO_t|$ open-factor
block with i.i.d.\ Rademacher entries, pass
through each exchangeable design entry in raster order, and
accept a sign flip if and only if it strictly decreases
$\Var(s_+^{\mathrm{W}})$ evaluated on the cumulative Gram matrix
(ties at zero improvement keep the current sign). The inner loop
iterates full sweeps until no flip is accepted in an entire sweep
or a cap of $30$ sweeps per round is reached. Fixed-set
columns are held at their committed levels and not exchanged.

\subsection{Criterion comparison}
\label{app:abl}

\textbf{Ablation A} (Table \ref{tab:abl_a}) swaps the design
criterion inside the full $\Sthree$ pipeline. Each arm constructs
its own coordinate-exchange two-level design under the indicated
criterion, then runs the same sequential GQ elimination with the
default $|\hat\beta_j|$ scoring. The result isolates the effect of
the design
criterion on end-to-end performance. Rows average over the retained
Ablation A summary cells satisfying the condition
$N_{\mathrm{S1}} < p$ for all reported arms.

\textbf{Ablation B} (Table \ref{tab:abl_b}) fixes Stage 1 at a
single LassoCV fit on a two-level design constructed under the
indicated criterion (no sequential elimination, no GQ). This
measures the criterion's effect on a single screen with a fixed run count,
holding the analysis method constant. Rows are means across all
$N_{\mathrm{S1}} < p$ cells.

\begin{table}[H]
\centering
\caption{Ablation A: design criterion swap inside the $\Sthree$
pipeline. Cell means are averaged over retained summary cells
satisfying $N_{\mathrm{S1}} < p$ for all reported arms, $R = 500$
replications. Win \% is the fraction of those cells on which the arm
has the smallest cell-mean LOSS.}
\label{tab:abl_a}
\small
\begin{tabular}{l rrr rrr}
\toprule
 & \multicolumn{3}{c}{Case 1 (808 cells)}
 & \multicolumn{3}{c}{Case 2 (890 cells)} \\
\cmidrule(lr){2-4}\cmidrule(lr){5-7}
Criterion & LOSS & T2 & Win \%
          & LOSS & T2 & Win \% \\
\midrule
$\Var(s_+^{\mathrm{W}})$
              & \textbf{2.106} & \textbf{0.340} & \textbf{99.5}\%
              & \textbf{3.579} & \textbf{0.560} & \textbf{99.4}\% \\
$E(s^2)$
              & 2.846 & 0.414 & 0.0\%
              & 4.032 & 0.672 & 0.0\% \\
$\Var(s_+)$
              & 2.755 & 0.405 & 0.0\%
              & 3.903 & 0.662 & 0.6\% \\
SS\_EC
              & 2.996 & 0.422 & 0.5\%
              & 4.127 & 0.669 & 0.0\% \\
\bottomrule
\end{tabular}
\end{table}

$\Var(s_+^{\mathrm{W}})$ has the lowest mean LOSS and the highest
cell-level win rate against every other criterion in both cases. The
gap on raw LOSS to the runner-up $\Var(s_+)$ is $-0.65$ in Case 1
and $-0.32$ in Case 2.

\begin{table}[H]
\centering
\caption{Ablation B: design criterion swap under fixed-$N_{\mathrm{S1}}$
one-shot LassoCV (no $\Sthree$ pipeline). Cell means across
$N_{\mathrm{S1}} < p$ rows, $R = 500$. The $p = 80$ rows span the
sparsity sweep $k_{\mathrm{true}} \in \{4, \ldots, 12\}$, while the
other $p$ run at $k_{\mathrm{true}} = 5$ only.}
\label{tab:abl_b}
\small
\begin{tabular}{c c rrrr}
\toprule
$p$ & rows & $\Var(s_+^{\mathrm{W}})$
          & $E(s^2)$ & $\Var(s_+)$ & SS\_EC \\
\midrule
\multicolumn{6}{l}{\textbf{Case 1 mean LOSS}}\\
40  & 4  & 1.851 & 1.951 & \textbf{1.839} & 2.325 \\
80  & 36 & \textbf{1.875} & 2.062 & 1.993 & 2.348 \\
120 & 4  & \textbf{0.295} & 0.315 & 0.327 & 0.433 \\
200 & 4  & \textbf{0.097} & 0.119 & 0.110 & 0.142 \\
\multicolumn{6}{l}{\textbf{Case 2 mean LOSS}}\\
40  & 4  & 3.386 & 3.417 & \textbf{3.273} & 3.629 \\
80  & 36 & 4.112 & 4.137 & \textbf{4.026} & 4.333 \\
120 & 4  & 1.950 & 2.002 & \textbf{1.940} & 2.120 \\
200 & 4  & \textbf{1.425} & 1.518 & 1.526 & 1.600 \\
\bottomrule
\end{tabular}
\end{table}

Two observations follow. First, in Ablation A $\Var(s_+^{\mathrm{W}})$
is unambiguously dominant. Second, in Ablation B, $\Var(s_+^{\mathrm{W}})$ leads Case 1 overall and at
$p \in \{80, 120, 200\}$, with a near tie at $p = 40$. It also leads
Case 2 at $p = 200$, while $\Var(s_+)$ leads Case 2 at
$p \in \{40, 80, 120\}$.
An arm is retained as a one-shot comparator in the main simulation
when it wins at least $3$ of the $48$ Ablation B cells. In Case 1 all
three designs qualify: l\_welch ($40/48$), l\_varsplus ($4/48$), and
l\_es2 ($4/48$). In Case 2 only l\_welch ($10/48$) and l\_varsplus
($36/48$) do.

\subsection{Criterion Scope for Quadratics and Known Interactions}
\label{app:phiplus}

The $\Var(s_+^{\mathrm{W}})$ criterion controls inner products between main-effect columns and can be extended to include known active interactions. On
the $\pm 1$ Stage 1 design a pure quadratic ($x_j^2 \equiv 1$) is
aliased with the intercept and deferred to the Stage 2
$\{-1, 0, +1\}$ pool, but a two-factor interaction is a genuine
$\pm 1$ column and can be controlled directly. When a set $\cA_{\mathrm{int}}$ of two-factor interactions
is known, the model matrix
$M = [\mathbf{x}_1, \ldots, \mathbf{x}_p]$ is augmented with the
interaction columns $\mathbf{z}_{ab} = \mathbf{x}_a \odot \mathbf{x}_b$,
$(a, b) \in \cA_{\mathrm{int}}$, and the criterion is evaluated over the
(main, main) and (main, interaction) off-diagonals of
$[\,M \mid Z\,]^{\!\top}[\,M \mid Z\,]$:
\begin{equation}
  \varphi^+_{\mathrm{welch}}(X)
  = \Var\!\bigl(s^{MM} \cup s^{MZ}\bigr)
  + \eta_{\mathrm{W}}\,\overline{(s^{MM} \cup s^{MZ})}^{\,2}
  + \lambda \!\!\sum_{s < 0}\! s^2 .
  \label{eq:phiplus}
\end{equation}
Here $\eta_{\mathrm{W}} = \eta_{\mathrm{W}}(n, p)$ is evaluated at the main-effect dimensions. This serves a similar purpose to retaining the linear main effects during BIC selection in Stage 2.

For the two active interactions of the test Cases,
$\cA_{\mathrm{int}} = \{(0,1),(1,2)\}$, we generate paired
plain-$\Var(s_+^{\mathrm{W}})$ and $\varphi^+$ designs by coordinate
exchange from a common random start ($30$ replications per $(n, p)$
cell) and measure the (main, interaction) inner products $s^{MZ}$.
Table \ref{tab:phiplus_pilot} shows that $\varphi^+$ approximately
halves $\Var(s^{MZ})$ and drives the negative-pair fraction from
$\approx0.45$ to $\approx0$, while $\max|s^{MM}|$ remains
essentially unchanged. 

\begin{table}[H]
\centering
\caption{Design-quality pilot for the $\varphi^+$ extension:
plain-$\Var(s_+^{\mathrm{W}})$ versus $\varphi^+$ with
$\cA_{\mathrm{int}} = \{(0,1),(1,2)\}$, $30$ paired coordinate-exchange
replications per $(n, p)$ cell. Columns report the variance and the
negative fraction of the (main, interaction) inner products $s^{MZ}$,
and the maximum absolute (main, main) inner product $|s^{MM}|$.}
\label{tab:phiplus_pilot}
\small
\begin{tabular}{cc rr rr rr}
\toprule
 & & \multicolumn{2}{c}{$\Var(s^{MZ})$}
   & \multicolumn{2}{c}{$\Pr(s^{MZ} < 0)$}
   & \multicolumn{2}{c}{$\max|s^{MM}|$} \\
\cmidrule(lr){3-4}\cmidrule(lr){5-6}\cmidrule(lr){7-8}
$n$ & $p$ & plain & $\varphi^+$ & plain & $\varphi^+$ & plain & $\varphi^+$ \\
\midrule
40 & 40  & 35.7 & 18.1 & 0.42 & 0.01 & 21.6 & 22.9 \\
50 & 80  & 45.4 & 20.6 & 0.46 & 0.00 & 24.1 & 24.7 \\
80 & 80  & 72.9 & 34.2 & 0.42 & 0.01 & 34.5 & 35.7 \\
60 & 120 & 52.4 & 25.6 & 0.44 & 0.00 & 30.3 & 31.1 \\
\bottomrule
\end{tabular}
\end{table}

\section{\texorpdfstring{$\Sthree$}{S3} versus One-Shot LassoCV: Cell-Level Win Rates}
\label{app:matched}

The main-text Table \ref{tab:main_results} reports aggregates at
the recommended operating point of
Section \ref{sec:main_comparison} for the l\_welch LassoCV baseline only and omits the cell-level win rate.
Table \ref{tab:band_comparison} provides the recommended-band
disaggregation, including the cell-level win rate against all three
LassoCV baselines.

\begin{table}[H]
\centering
\caption{Cell-level breakdown at the recommended operating point:
paired $\Delta\mathrm{LOSS}$ (cell-mean over recommended $c$-band $\times$
$q \in [0.40, 0.80]$ $\times$ $k = 5$) against three one-shot LassoCV baselines,
with 95\% paired-replication confidence interval half-widths ($\pm$).}
\label{tab:band_comparison}
\small
\begin{tabular}{c c c rrr c}
\toprule
 & & rec.\ $c$ & \multicolumn{3}{c}{$\Delta$LOSS vs (95\% CI half-width $\pm$)} & cells \\
\cmidrule(lr){4-6}
Case & $p$ & band
  & l\_welch & l\_varsplus & l\_es2 & won \\
\midrule
1 & 40  & $[0.55, 0.65]$ & $-0.104{\,\pm\,}0.026$ & $-0.174{\,\pm\,}0.026$ & $-0.226{\,\pm\,}0.027$ & 22/27 \\
1 & 80  & $[0.45, 0.65]$ & $-0.099{\,\pm\,}0.016$ & $-0.179{\,\pm\,}0.016$ & $-0.212{\,\pm\,}0.016$ & 39/45 \\
1 & 120 & $[0.35, 0.50]$ & $-0.054{\,\pm\,}0.017$ & $-0.154{\,\pm\,}0.018$ & $-0.191{\,\pm\,}0.018$ & 27/36 \\
1 & 200 & $[0.35, 0.50]$ & $-0.066{\,\pm\,}0.012$ & $-0.110{\,\pm\,}0.012$ & $-0.114{\,\pm\,}0.012$ & 34/36 \\
\midrule
2 & 40  & $[0.55, 0.65]$ & $-0.065{\,\pm\,}0.035$ & $+0.032{\,\pm\,}0.036$ & -- & 19/27 \\
2 & 80  & $[0.45, 0.65]$ & $-0.078{\,\pm\,}0.020$ & $-0.064{\,\pm\,}0.019$ & -- & 38/45 \\
2 & 120 & $[0.35, 0.50]$ & $-0.066{\,\pm\,}0.023$ & $-0.040{\,\pm\,}0.022$ & -- & 28/36 \\
2 & 200 & $[0.35, 0.50]$ & $+0.006{\,\pm\,}0.017$ & $-0.033{\,\pm\,}0.016$ & -- & 19/36 \\
\bottomrule
\end{tabular}
\end{table}

\section{Experiment under a Fixed Total Budget}
\label{app:fixed_budget}

This appendix reports a complementary check under a single fixed total
budget, in contrast to the matched Stage 1 budget of
Section \ref{sec:main_comparison}. The check fixes the total budget in
advance, a protocol distinct from the adaptive main method, so that
each comparator may pick a feasible model along its own path. We run it
at the tight level, the
regime the method targets, where the total budget is just enough to
screen and then fit a second-order surface. Every arm receives the same
total $N$ ($50$, $70$, $80$, and $110$ at $p = 40, 80, 120, 200$),
allocates it between screening and the surface, and may not exceed it.

$\Sthree$ runs budget-aware, the same pipeline rather than a separate
method. It reserves $p_{\mathrm{poly}}(2) + 5 = 11$ runs for the surface
before each round, stops Stage 1 once fewer than two new runs
can be added, and truncates the selected set by the same feasibility
rule the one-shot arms use, with cost ratio $c$ set to the value of
Section \ref{sec:main_comparison} for each $p$
($0.55$, $0.45$, $0.35$, $0.35$). The one-shot arms split $N$ as
$N_{\mathrm{S1}}/N\in\{0.4,0.5,0.6\}$.
We also compare $\Sthree$ against each baseline at the split
giving that baseline its lowest mean LOSS.
The remaining budget must cover $p_{\mathrm{poly}}+5$ runs,
which determines the largest feasible model size, capped at
15 factors.

The two comparators are budget-aware LassoCV, which takes the
LassoCV path point with the lowest cross-validation error among
those satisfying the model size constraint, and GDS-ARM
(adapted), a recent SSD analysis comparator \citep{singh2024factor}.
A marginal prefilter keeps GDS-ARM tractable at $p \ge 80$, and a
$p = 40$ anchor against the full method confirms the adaptation is
faithful.

We report the comparison at the default budget split
(Table \ref{tab:fb_arms}) and across budget splits
(Table \ref{tab:fb_split}).

\begin{table}[H]
\centering
\caption{Tight-budget mean LOSS and $F_1$ by arm at the default split
$N_{\mathrm{S1}}/N = 0.5$, with $R = 500$ replications per cell. $\Sthree$ allocates its runs adaptively. BA-LassoCV is the budget-aware LassoCV and GDS-ARM is the adapted
comparator.}
\label{tab:fb_arms}
\small
\begin{tabular}{c c rr rr rr}
\toprule
 & & \multicolumn{2}{c}{$\Sthree$} & \multicolumn{2}{c}{BA-LassoCV}
   & \multicolumn{2}{c}{GDS-ARM} \\
\cmidrule(lr){3-4}\cmidrule(lr){5-6}\cmidrule(lr){7-8}
Case & $p$ & LOSS & $F_1$ & LOSS & $F_1$ & LOSS & $F_1$ \\
\midrule
1 & 40  & $1.63$ & $0.71$ & $3.29$ & $0.66$ & $3.14$ & $0.70$ \\
1 & 80  & $1.24$ & $0.77$ & $2.03$ & $0.69$ & $2.32$ & $0.75$ \\
1 & 120 & $1.12$ & $0.83$ & $1.96$ & $0.69$ & $2.15$ & $0.74$ \\
1 & 200 & $0.66$ & $0.86$ & $1.25$ & $0.66$ & $1.76$ & $0.76$ \\
\midrule
2 & 40  & $3.00$ & $0.51$ & $3.65$ & $0.44$ & $3.75$ & $0.48$ \\
2 & 80  & $2.56$ & $0.54$ & $2.92$ & $0.47$ & $3.58$ & $0.48$ \\
2 & 120 & $2.52$ & $0.53$ & $3.14$ & $0.45$ & $3.64$ & $0.47$ \\
2 & 200 & $2.09$ & $0.57$ & $2.57$ & $0.44$ & $3.20$ & $0.50$ \\
\bottomrule
\end{tabular}
\end{table}

\begin{table}[H]
\centering
\caption{Tight-budget split sensitivity, pooled over $p$.
Entries are the comparator's LOSS minus $\Sthree$'s LOSS, so positive
values favor $\Sthree$. Split labels apply to the one-shot arms.
Each split column uses the $\Sthree$ results from the corresponding
comparison. The best column takes the comparator's lowest mean LOSS
across splits at each $p$, with the $\Sthree$ reference from the
$0.5$ comparison.}
\label{tab:fb_split}
\small
\begin{tabular}{c rrrr rrrr}
\toprule
 & \multicolumn{4}{c}{budget-aware LassoCV}
   & \multicolumn{4}{c}{GDS-ARM (adapted)} \\
\cmidrule(lr){2-5}\cmidrule(lr){6-9}
Case & $0.4$ & $0.5$ & $0.6$ & best
   & $0.4$ & $0.5$ & $0.6$ & best \\
\midrule
1 & $+1.11$ & $+0.97$ & $+0.55$ & $+0.61$ & $+1.86$ & $+1.18$ & $+0.60$ & $+0.66$ \\
2 & $+0.69$ & $+0.53$ & $+0.16$ & $+0.23$ & $+1.48$ & $+1.00$ & $+0.28$ & $+0.35$ \\
\bottomrule
\end{tabular}
\end{table}

Against both comparators, $\Sthree$ has the lower LOSS and the higher
$F_1$ in every one of the eight case-by-$p$ cells
(Table \ref{tab:fb_arms}), and the LOSS advantage holds at every budget
split, including each comparator's most favourable split
(Table \ref{tab:fb_split}).

\section{Borehole Factor-Level Analysis}
\label{app:borehole}

This appendix supports the factor-level reading of
Section \ref{sec:borehole}. Table \ref{tab:borehole_factor} reports the
selection frequency of each of the eight physical factors and the mean
inactive selection rate at the representative cell $(c, q) = (0.50, 0.65)$.

\begin{table}[H]
\centering
\caption{Borehole factor-level selection frequency (\%) at the
representative cell $(c, q) = (0.50, 0.65)$, $R = 500$ replications. The bottom row is
the mean selection frequency across the $72$ inactive factors
($x_8, \ldots, x_{79}$).}
\label{tab:borehole_factor}
\small
\begin{tabular}{l l rrrr}
\toprule
factor & physical name & $\Sthree$ & l\_welch & l\_varsplus & l\_es2 \\
\midrule
$x_0$ & $r_w$ (borehole radius)            & $100.0$ & $100.0$ & $100.0$ & $100.0$ \\
$x_1$ & $r$ (radius of influence)          & $1.4$   & $22.4$  & $18.0$  & $15.4$  \\
$x_2$ & $T_u$ (upper transmissivity)       & $1.6$   & $22.6$  & $20.4$  & $17.2$  \\
$x_3$ & $H_u$ (upper head)                 & $93.4$  & $98.4$  & $99.4$  & $99.6$  \\
$x_4$ & $T_l$ (lower transmissivity)       & $1.4$   & $20.8$  & $20.8$  & $19.0$  \\
$x_5$ & $H_l$ (lower head)                 & $89.8$  & $98.6$  & $99.4$  & $99.4$  \\
$x_6$ & $L$ (borehole length)              & $91.4$  & $98.4$  & $98.6$  & $99.2$  \\
$x_7$ & $K_w$ (hydraulic conductivity)     & $43.4$  & $73.8$  & $71.2$  & $70.2$  \\
\midrule
\multicolumn{2}{l}{inactive mean ($x_8$--$x_{79}$, $72$ factors)} & $1.5$ & $21.0$ & $18.9$ & $18.4$ \\
\bottomrule
\end{tabular}
\end{table}

Figure \ref{fig:borehole_heatmap} shows the cell-mean LOSS gaps
against the lowest LassoCV mean LOSS at each $(c,q)$ setting.
$\Sthree$ wins $57/63$ cells with a mean gap of $-2.25$.
The six exceptions lie in the small-$c$, small-$q$ corner.

\begin{figure}[!htbp]
\centering
\includegraphics[width=0.70\textwidth]{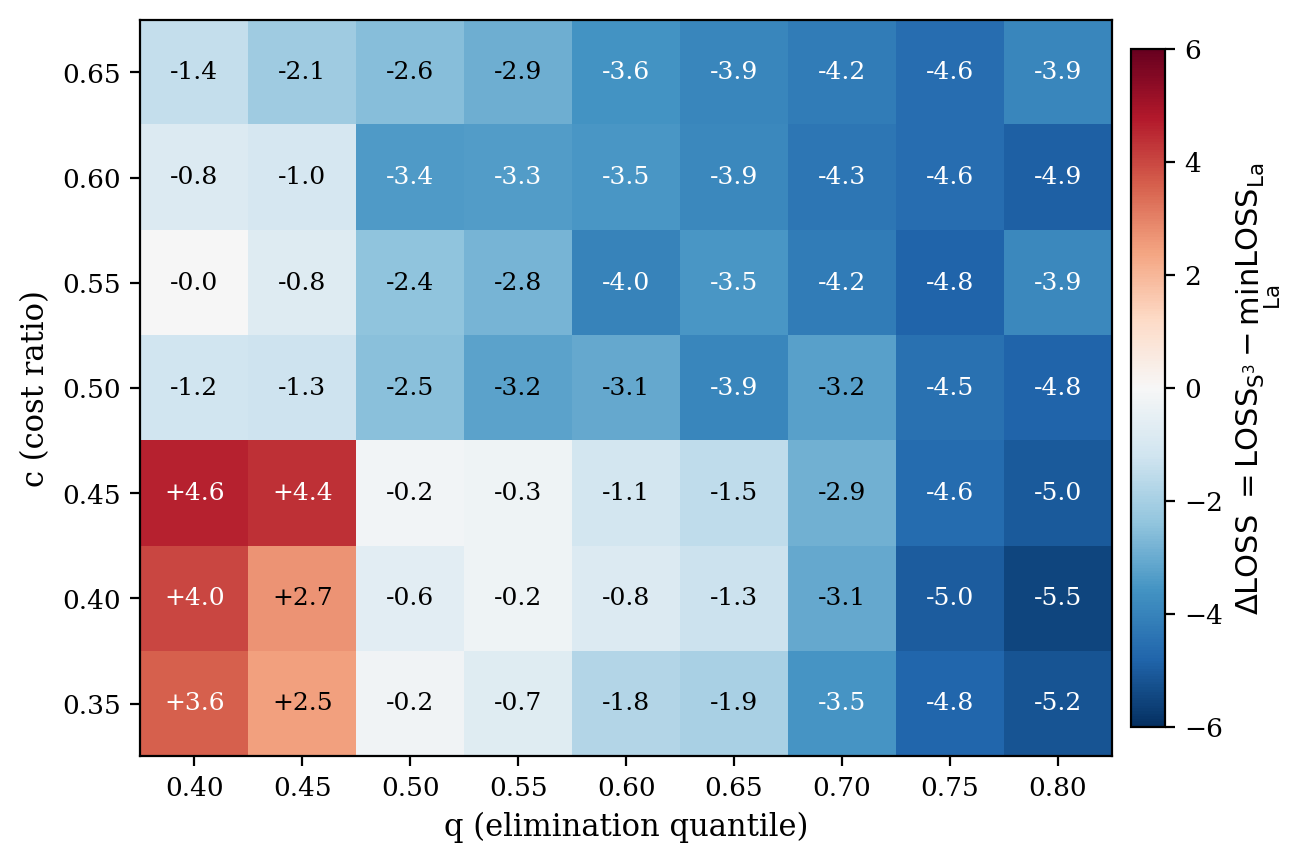}
\caption{Borehole cell-mean LOSS gap
$\overline{\mathrm{LOSS}}_{\Sthree}
-\min_{\mathrm{La}}\overline{\mathrm{LOSS}}_{\mathrm{La}}$
over the $7\times9$ $(c,q)$ grid.
The minimum is taken across the cell means for
l\_welch, l\_varsplus and l\_es2.
Negative values favor $\Sthree$ and positive values favor one-shot LassoCV.}
\label{fig:borehole_heatmap}
\end{figure}

\section{\texorpdfstring{$\Sthree$}{S3} versus a Naive Sequential Halving Baseline}
\label{app:nsh}

This appendix isolates the joint contribution of the four $\Sthree$
innovations the framework wraps around the LassoCV model fit:
the $\Var(s_+^{\mathrm{W}})$ coordinate-exchange design, the
GQ schedule, the elbow-fixing rule, and the sign commitment. We compare
$\Sthree$ against a Naive Sequential Halving (NSH) baseline that
strips all four simultaneously. At each round, NSH
allocates $n_t = \max(2, \lceil c|\cC_t|\rceil)$ runs from a
random two-level design, fits LassoCV with round-block
indicators, retains the top
$\max(k^\star, \lfloor |\cC_t| / 2 \rfloor)$ candidates by
$|\hat\beta_j|$ (with target $k^\star = 5$), and eliminates the
remainder. There is no GQ, no elbow fixing, no
sign commitment, and no design optimization. The Stage 2
response surface pipeline is identical to $\Sthree$.

We use $\Sthree$'s realized $N_{\mathrm{S1}}$ as the Stage 1
run target for NSH. The realized counts differ by at most one run.
We compare both arms at the matched count
$k_{\mathrm{m}}=\min(k^{\Sthree}_{\mathrm{sel}},
k^{\mathrm{NSH}}_{\mathrm{sel}})$, so the Stage 2 augmentation and total
budgets are closely matched. Table \ref{tab:nsh_results} aggregates
$R=500$ replications at the representative cell
$(c,q)=(0.50,0.65)$ across $p\in\{40,80,120,200\}$ and the two cases.

\begin{table}[H]
\centering
\caption{$\Sthree$ versus NSH at representative cell $(c, q) = (0.50, 0.65)$ with
matched Stage 1 budget, matched-$k$ comparison. $\Delta_{\mathrm{m}} = \mathrm{LOSS}^{\Sthree} -
\mathrm{LOSS}^{\mathrm{NSH}} < 0$ indicates $\Sthree$ wins.}
\label{tab:nsh_results}
\small
\begin{tabular}{cc r rr r}
\toprule
Case & $p$
  & $k_{\mathrm{m}}$
  & $\mathrm{LOSS}^{\Sthree}$ & $\mathrm{LOSS}^{\mathrm{NSH}}$
  & $\Delta_{\mathrm{m}}$ \\
\midrule
1 & 40  & 5.00 & 1.909 & 4.093 & $-2.184$ \\
1 & 80  & 5.59 & 1.085 & 1.563 & $-0.478$ \\
1 & 120 & 5.81 & 0.684 & 1.065 & $-0.381$ \\
1 & 200 & 5.76 & 0.426 & 0.642 & $-0.216$ \\
2 & 40  & 4.67 & 3.316 & 4.848 & $-1.533$ \\
2 & 80  & 5.99 & 2.427 & 3.025 & $-0.598$ \\
2 & 120 & 5.79 & 2.101 & 2.531 & $-0.430$ \\
2 & 200 & 6.18 & 1.716 & 2.040 & $-0.324$ \\
\bottomrule
\end{tabular}
\end{table}

$\Sthree$ wins the matched-$k$ comparison in every cell. The mean
$\Delta_{\mathrm{m}}$ is $-0.815$ for Case 1 and $-0.721$ for Case 2.
The largest gap is $-2.184$ at $p = 40$ in Case 1, and the magnitude
decreases as $p$ grows in both cases.

\section{Sensitivity to the Positive-Cone Penalty Weight \texorpdfstring{$\lambda$}{lambda}}
\label{app:lambda_sweep}

We use $\lambda=10^6$ as the default positive-cone penalty
weight in \eqref{eq:welch_crit}.
We examine sensitivity to this choice using the full two-stage
procedure at the representative cell with five penalty weights.

\begin{table}[H]
\centering
\caption{End-to-end $\Sthree$ output for five penalty weights in
Case 1 at $p=80$, $k_{\mathrm{true}}=5$,
$(c,q)=(0.50,0.65)$ and $R=500$.
$k_{\mathrm{sel}}$ is the raw selected size.
Type I, Type II, $F_1$ and MCC use selections capped at 15 factors.
The final column reports the proportion of selected Stage 1
submatrices, before capping, with $\min_{i<j}s_{ij}\ge0$.}
\label{tab:lambda_sweep}
\small
\begin{tabular}{r r r r r r r r}
\toprule
$\lambda$ & LOSS & TypeI & TypeII & $F_1$ & MCC & $k_{\mathrm{sel}}$ & Positive cone \\
\midrule
$10^{4}$ & $1.121$ & $0.019$ & $0.156$ & $0.805$ & $0.802$ & $5.67$ & $95.2\%$ \\
$10^{5}$ & $1.091$ & $0.017$ & $0.154$ & $0.819$ & $0.816$ & $5.57$ & $97.6\%$ \\
$10^{6}$ & $1.127$ & $0.020$ & $0.156$ & $0.804$ & $0.801$ & $5.72$ & $97.8\%$ \\
$10^{7}$ & $1.107$ & $0.020$ & $0.153$ & $0.805$ & $0.802$ & $5.72$ & $95.0\%$ \\
$10^{8}$ & $1.070$ & $0.018$ & $0.152$ & $0.816$ & $0.812$ & $5.59$ & $96.6\%$ \\
\bottomrule
\end{tabular}
\end{table}

The reported metrics vary little across the five penalty
weights at this setting. Mean LOSS ranges from $1.070$ to
$1.127$, a difference of $0.057$. The proportion of designs
in the positive cone ranges from $95.0\%$ to $97.8\%$.

\section{The Complete Two-Stage \texorpdfstring{$\Sthree$}{S3} Procedure}
\label{app:fullalg}

Algorithm \ref{alg:s3} states the full $\Sthree$ procedure.

\begin{algorithm}[H]
\caption{The complete two-stage $\Sthree$ procedure}
\label{alg:s3}
\small
\begin{algorithmic}[1]
\Require $p$ candidate factors, cost ratio $c$, elimination
         quantile $q$, maximum rounds $S_{\max}$
\Ensure Estimated optimum $\hat{\mathbf{x}}$ when Stage 2 completes
\Statex \textbf{Stage 1 (sequential screening).}
\State Initialize $\cC \leftarrow [p]$,\;
       $\cF \leftarrow \emptyset$,\; $t \leftarrow 1$
\While{$\cC \setminus \cF \ne \emptyset$}
  \State $n_t \leftarrow \max(2, \lceil c\,|\cC \setminus \cF| \rceil)$
  \State Generate the $n_t$-run two-level design on $\cC$ via coordinate exchange
         with $\Var(s_+^{\mathrm{W}})$ on the cumulative matrix,\;
         holding $\cF$ at committed levels
  \State Collect responses, update the cumulative data $(X,\mathbf{y},\mathbf{B})$, and set $N_t\leftarrow|\mathbf{y}|$
  \State Fit LassoCV model \eqref{eq:surrogate}, compute $I_j$
         via \eqref{eq:impscore}, record nonzero coefficient signs
  \State $q_t \leftarrow q \cdot \min(1,\; N_t / |\cC|)$,\;
         $\tau_t \leftarrow$ $q_t$-quantile of
         $\{I_j : j \in \cC \setminus \cF\}$
  \State $E \leftarrow \{j \in \cC \setminus \cF : I_j \le \tau_t\}$,
         the factors marked for elimination
  \State Compute the elbow $r^\star$ on the open scores, add the
         top-$r^\star$ open factors with recorded signs to $\cF$ at
         their committed levels (a single open factor is added directly)
  \State $\cC \leftarrow \cC \setminus (E \setminus \cF)$, dropping the
         eliminated columns from $X$
  \State \textbf{if} $\cF$ did not grow and $t \ge 2$, or $t = S_{\max}$:
         exit the loop
  \State $t \leftarrow t + 1$
\EndWhile
\State $\hat\cA \leftarrow \cF$ with committed signs,\;
       $k_{\mathrm{sel}} \leftarrow |\hat\cA|$
\Statex \textbf{Stage 2 (response surface).}
\State Apply the size rules of Section \ref{sec:metrics} to form
       the Stage 2 subset $\mathcal A_2\subseteq\hat\cA$
\State \textbf{if} $|\mathcal A_2|<2$: \Return no Stage 2 estimate
\State Augment the cumulative design on $\mathcal A_2$ using
       the allocation and sampled greedy forward rule
\State Fit by OLS with BIC forward selection, keeping all linear main effects and block terms
\State Optimize the fitted polynomial without block terms on
       $[-1,1]^{|\mathcal A_2|}$, giving
       $\hat{\mathbf{x}}_{\mathcal A_2}$
\State Set $\hat{\mathbf{x}}_{\cF\setminus\mathcal A_2}$ to committed levels
       and $\hat{\mathbf{x}}_{[p]\setminus\cF}$ to zero
\State \Return $\hat{\mathbf{x}}$
\end{algorithmic}
\end{algorithm}

\end{document}